\newcommand{\algoname}{Periodic EXP4\xspace}
\algnewcommand\algorithmicforeach{\textbf{for each}}
\newcommand{\floor}[1]{\lfloor #1 \rfloor}
\newcommand{\ds}{\displaystyle }
\newcommand{\xhat}{{\widehat{x}} }
\newcommand{\tbar}{ {\overline{t}} }
\newcommand{\gammK}{{\frac{\gamma}{K}} }
\newcommand{\sumjk}{{\sum_{j=1}^K} }
\title{Periodic Bandits and Wireless Network Selection} 
\titlerunning{Periodic Bandits and Wireless Network Selection}
\author{Shunhao Oh}{Department of Computer Science, National University of Singapore}{ohoh@u.nus.edu}{}{}
\author{Anuja Meetoo Appavoo}{Department of Computer Science, National University of Singapore}{anuja@comp.nus.edu.sg}{}{}
\author{Seth Gilbert}{Department of Computer Science, National University of Singapore}{seth.gilbert@comp.nus.edu.sg}{}{}
\authorrunning{S. Oh, A. Meetoo Appavoo and S. Gilbert}
\keywords{multi-armed bandits, wireless network selection, periodicity in environment}
\begin{document}

\maketitle

\begin{abstract}
Bandit-style algorithms have been studied extensively in stochastic and adversarial settings. Such algorithms have been shown to be useful in multiplayer settings, e.g. to solve the wireless network selection problem, which can be formulated as an adversarial bandit problem. A leading bandit algorithm for the adversarial setting is EXP3. However, network behavior is often repetitive, where user density and network behavior follow regular patterns. Bandit algorithms, like EXP3, fail to provide good guarantees for periodic behaviors. A major reason is that these algorithms compete against fixed-action policies, which is ineffective in a periodic setting.

In this paper, we define a periodic bandit setting, and periodic regret as a better performance measure for this type of setting. Instead of comparing an algorithm's performance to fixed-action policies, we aim to be competitive with policies that play arms under some set of possible periodic patterns $F$ (for example, all possible periodic functions with periods $1,2,\cdots,P$).
We propose \algoname, a computationally efficient variant of the EXP4 algorithm for periodic settings. With $K$ arms, $T$ time steps, and where each periodic pattern in $F$ is of length at most $P$, we show that the periodic regret obtained by \algoname is at most $O\big(\sqrt{PKT \log K  + KT \log |F|}\big)$. We also prove a lower bound of $\Omega\big(\sqrt{PKT + KT \frac{\log |F|}{\log K}} \big)$ for the periodic setting, showing that this is optimal within log-factors. As an example, we focus on the wireless network selection problem. Through simulation, we show that \algoname learns the periodic pattern over time, adapts to changes in a dynamic environment, and far outperforms EXP3.
\end{abstract}

\newpage

\section{Introduction} \label{section:introduction}            
The \emph{multi-armed bandit} problem is an online learning problem in which a player has access to a set of choices (i.e., ``arms'') each of which provides some reward (i.e., ``gain'').  At each time step, the player chooses an arm and gets some reward.  In stochastic variants, rewards are determined by some probabilistic distribution.  In adversarial variants, an adversary specifies the rewards.  
Amazingly, even when rewards are adversarially chosen, the player can do fairly well! For example, the EXP3 algorithm~\cite{auer2002nonstochastic} minimizes the player's ``regret'', ensuring that the player does almost as well as if she had selected the single fixed best arm throughout.
Another fascinating property of bandit algorithms is that they work well in multi-player settings~\cite{tekin2011performance, kleinberg2009multiplicative}, converging to close variants of a Nash equilibrium.  

Recently, it has been shown that bandit-style algorithms can efficiently solve the \emph{wireless network selection problem}, yielding good performance both in theory and in practice~\cite{appavoo2018shrewd,appavoo2019cooperation,multiuserlaxcomms}. In this problem, each user has access to a collection of networks
(e.g., a few different WiFi networks and a 4G connection);
the goal is to pick networks with higher data rates. Selecting the best network is challenging, especially in dynamic environments where the ``best'' network changes over time, as users move and network bandwidth fluctuates. This can be modeled as an adversarial bandit problem and solved with EXP3 and its variants.

Bandit algorithms have one major weakness in dynamic settings (such as wireless network settings):
they are designed to learn the average payoff of each arm, and to converge to the arm that provides the best average performance.  In the stochastic case, this is exactly what you want.  In the adversarial case, it leads to minimum regret, i.e., the user does almost as well as if they knew the best network in advance.  If, however, the situation is changing over time, and especially if it is changing in some predictable manner, then learning the average payoff of each arm is not productive.

Periodic, repetitive patterns are a particularly common type of dynamic behavior.  Take, for example, the problem of network selection.  Network behavior is often repetitive, with user density and network quality following regular patterns: for example, office WiFi networks have no users at night, their performance drops when workers arrive in the morning, and the performance improves again during lunch hour.  Other networks are clogged with streaming video during lunch hour and in the evenings.  Periodic patterns are ubiquitous.

Unfortunately, bandit algorithms will fail badly in the case of periodic behavior.
As an example, suppose a player is playing a slot machine with two arms. The first arm gives a reward of $1$ when pulled on odd-numbered hours and $0$ otherwise, while the second arm does the reverse, with a reward of $1$ on even-numbered hours and $0$ otherwise.  In this simple case, a bandit algorithm will never learn this pattern, instead converging to the best single-action policy; and the best policy can only reap half of the maximum reward.  The player will receive an average payout of only $1/2$ per selection, despite a very predictable pattern.  And when this case is extended to cycle among $K$ arms, the best fixed choice of arm gives only $1/K$ of the total obtainable reward.  Thus, algorithms like EXP3 that minimize the regret do not guarantee good performance on periodic problems. 

\subsection{Contributions}
Our goal in this paper is to develop an efficient adversarial bandit algorithm for periodic settings, and to demonstrate the effectiveness of this algorithm in the context of the wireless network selection problem, yielding a new approach to network selection in dynamic, periodic environments.
The first step is to establish the right metric by which to evaluate bandit algorithms.  The performance of an adversarial bandit algorithm is heavily characterized by the definition of ``regret,'' which forms the baseline that it competes against.  And traditionally, the regret is computed with respect to the best fixed strategy.

For the periodic bandit setting, we define a better performance measure, \enquote{periodic regret}, which compares an algorithm's performance against the best periodic choice of arms.  No choice of period may match the input data perfectly, but the goal of periodic regret is to compare against the best choice.  
Moreover, we provide a generalized notion of periodicity, so that this notion of periodic regret can capture different types of patterned behavior. 

Next, we develop an algorithm that minimizes periodic regret, \algoname, a computationally efficient variant of EXP4 (Exponential-weight algorithm for Exploration and Exploitation using Expert advice)~\cite{auer2002nonstochastic}.  We show that the algorithm minimizes periodic regret in the following sense: 
with $K$ arms, $|F|$ possible periods, with each possible period of at most length $P$, then in an execution of length $T$ the periodic regret is at most $O\big(\sqrt{PKT \log K  + KT \log |F|}\big)$. We also prove a lower bound of $\Omega\big(\sqrt{PKT + KT \frac{\log |F|}{\log K}} \big)$ on periodic regret in an adversarial setting, showing that this is optimal within log-factors. 
An important aspect of \algoname is that it is a polynomial time algorithm: we leverage the structure provided by the target periodic patterns to reduce the computational complexity.  This is in contrast to EXP4 which requires exponential time and space in this context. 

The other major contribution of this paper is a new algorithm for network selection that is especially optimized for environments with periodic, patterned behaviors. We simulate the network selection problem, comparing \algoname to EXP3 and to a ``randomized optimal'' omniscient solution.  (We have previously seen in~\cite{appavoo2018shrewd} that these types of simulations are reasonably predictive of real-world behavior.)  

Our first observation is that \algoname does in fact efficiently learn periodic patterns and adapts relatively quickly to changes in network data rates (both discrete and continuous).  
We also see that \algoname does indeed outperform EXP3 in periodic settings, as expected, potentially yielding significant real-world improvements.  

Our second question involved the robustness of \algoname to noisy patterns. Real-world periodic patterns are rarely perfectly periodic, suffering noise and variance.  We experiment with noisy patterns, and see that \algoname continues to work well.

Finally, our third set of experiments looked at the performance of \algoname in the context of user mobility.  We simulate several scenarios where users change location over time, leading to changes in which networks they can access (and hence changes in the load on those networks).  For example, we imagine a typical office scenario where users arrive at the office in the morning, take a break for lunch, return to work, and then head home at the end of the day.  We observe that \algoname can also learn this type of periodic behavior, again, learning to adapt the users' network selection in a near-optimal fashion.  In fact, we compare two versions of the algorithm: one in which the algorithm is notified when networks become unavailable, and one in which it is not---we observe that even in the latter case where it is completely oblivious to the changes, the user strategy converges to near-optimal choices. 

Overall, we conclude that periodic adversarial bandit algorithms may have significant value, that \algoname is an efficient algorithm for the problem, and that it yields a potentially interesting and useful approach to network selection.

\section{Related work} \label{section:relatedWorks}
In this section, we discuss relevant work done on bandit algorithms, and state-of-art wireless network selection approaches.
Multi-armed bandit techniques have been successfully applied to wireless network selection \cite{appavoo2018shrewd, appavoo2019cooperation,multiuserlaxcomms}. They have also been considered for other resource selection problems, such as channel selection \cite{gai2010learning, tekin2011performance}, selection of the right sensors to query in a sensor network \cite{golovin2010online}, and selection of replica server for content distribution networks \cite{tran2014qoe}.

Many variations of bandit problems have been studied, in both stochastic and adversarial settings. EXP3 is the most well-known algorithm for the standard adversarial bandit problem. With $K$ arms and $T$ time steps, it establishes a pseudo-regret upper bound of $O(\sqrt{KT\log K})$, which almost matches the lower bound of $\Omega(\sqrt{KT})$ \cite{auer2002nonstochastic}.
The $\log K$ gap in the bounds has been recently closed by \cite{audibert2009minimax} bringing the upper bound down to $O(\sqrt{KT})$. But, these bound the regret against the best single-action policy, limiting their usefulness in a periodic setting.

A related problem is that of bandits with expert advice, defined in the same paper \cite{auer2002nonstochastic}. 
It defines a more general notion of regret, by competing against the best policy from a set.
With $K$ arms, $T$ time steps and $N$ experts, the EXP4 algorithm gives a pseudo-regret bound of $O(\sqrt{KT\log N})$. 
However, its  possibly high running time and memory cost limit its use in practice.
There are other algorithms for bandits with expert advice, like Context-FTPL. The latter is more computationally efficient, but has a weaker regret bound \cite{syrgkanis2016efficient}.
A lower bound of $\Omega(\sqrt{KT\frac{\log N}{\log K}})$ \cite{seldin2016lower} has been shown, but the $\log K$ gap in bounds has not been closed.

An equivalent formulation of our generalized periodic regret (explained later in Section \ref{section_generalizedperiodicregret}) has been briefly discussed in \cite[Chapter~4.2.1]{bestcontextset}, phrased as a contextual bandit problem where the algorithm competes against the best context set from a class of context sets.
The possible use of EXP4 is mentioned, but an alternative algorithm with a weaker regret bound is instead discussed as it has a reasonable polynomial-time performance unlike EXP4.

While much of the existing literature assume a single best arm, there are other efforts to look beyond this. One approach to the stochastic version of the problem is to allow reward distributions of the arms to occasionally change \cite{nonstationary2014, nonstationary2017}. Our work on the other hand is fully adversarial, and makes no assumptions on the rewards produced by the adversary.

%
Numerous wireless network selection approaches have been proposed. Some are centralized  \cite{aryafar2017max, bejerano2004fairness, mishra2006client, sui2016characterizing}; hence, not scalable and limited to managed networks. 
A number of distributed approaches have been proposed, with various limitations. Some rely on coordination from networks \cite{kauffmann2007measurement}, while others require cooperation of wireless devices \cite{deng2014all}. Others assume global knowledge \cite{niyato2009dynamics,aryafar2013rat, monsef2015convergence}, or availability of some information \cite{zhu2010network,cheung2017congestion}. A continuous-time multi-armed bandit approach in a stochastic setting has been considered in \cite{wu2016traffic}. A similar setting to ours, though non-periodic and in the stochastic setting, is considered in \cite{multiuserlaxcomms}.

\section{Wireless Network Selection} \label{section:problemFormulation}
Here, we describe the wireless network selection problem, discuss the periodicity of events in wireless environments, and formulate the network selection problem as a bandit problem.

\subsection{Wireless network selection problem.}
We consider an environment with multiple wireless devices and heterogeneous wireless networks, such as the one depicted in Figure~\ref{figure:heterogeneousNetworks}. The latter illustrates four mobile users with their (active) mobile devices, and five wireless networks, namely four WiFi networks and a cellular network (represented using 3 cellular base stations). The wireless networks have limited areas of coverage. Hence, each mobile device may have access to a different set of wireless networks depending on their location, e.g. different networks are available at home and at the office.
The bandwidths of wireless networks may also vary with time.
Each mobile device aims to quickly identify and associate with the best network, which may vary over time, to maximize their data rates. 

\begin{figure}[!htb]
\begin{center}
 \includegraphics [scale=0.6]
 {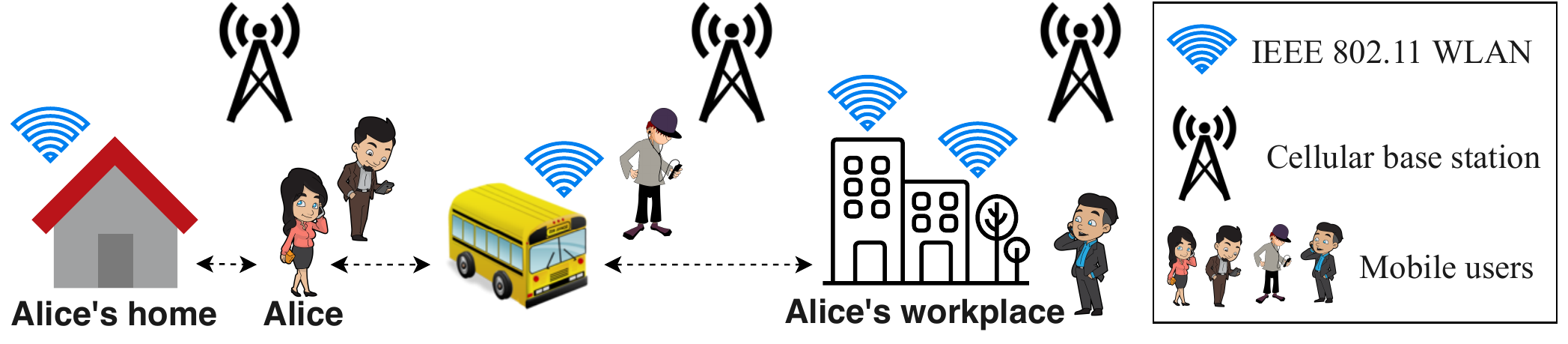}
\end{center}
\caption{Mobile devices with access to a different set of wireless networks as the user moves.}
\label{figure:heterogeneousNetworks}
\end{figure}

Mobile users tend to have daily routines that follow repetitive patterns---going to the office each morning, lunch at noon, returning home in the evening; these activities are performed at fixed times each weekday. Figure \ref{figure:heterogeneousNetworks} broadly depicts the daily routine of a mobile user, Alice. Network behavior, which is affected by user density, is also often repetitive and follows a regular pattern. For example, the available bandwidth of office WiFi networks is likely to be higher during lunch hours, where the office is nearly empty. A good network selection protocol learns and adapts to periodic patterns in network quality for better performance.

\subsection{Wireless network selection as a bandit problem}
A device must be aware of the bit rate it can observe from each network to perform an optimal network selection. While this information is unknown at the time of selection, the device can estimate the achievable bit rate by exploring the networks.  
The network selection problem can be seen as a multi-armed bandit problem in a multi-player setting. A mobile device is a player, and each network can be considered as an arm. Every so often (e.g. once per minute), a device selects a network (analogous to pulling an arm) and observes a bit rate (gain) for that network. The gain from other networks is unknown to the device.
Given that mobile devices operate in a dynamic environment, they must continuously explore and adapt to changes, by deciding which networks to select in sequence. The goal of each device is to maximize its cumulative gain over time. Since the quality of a wireless network is affected by its number of clients, other mobile devices in the environment may be considered to be adversaries. We hence use the adversarial setting. A leading bandit algorithm in this setting is EXP3.

\section{Periodic Bandit Problem} \label{section:periodicProblem}
In this section, we introduce the periodic bandit problem and discuss periodic regret.

We consider a general bandit problem.
On each time step, an algorithm is allowed to pick any one out of $K$ possible arms, and each arm produces a certain amount of reward. These rewards are unknown to the algorithm, which can only observe the reward of the arm it picked. We aim to maximize the total reward obtained by the algorithm.
We study the adversarial setting with a possibly adaptive adversary, which decides on the distribution of rewards at each time step, taking into consideration the outcomes of past random events.

Let $K$ be the number of arms. The set of arms is $[K] := \{1,2,\cdots,K\}$.
Let $x_i(t) \in [0,1]$ be the reward earned by arm $i \in [K]$ at time step $t$.
Let $a(t) \in [K]$ be the arm played by the algorithm at time $t$.
Let $T$ be the total number of time steps. The set of time steps is $[T] := \{1,2,\cdots,T\}$.
Thus, the total reward earned by the algorithm after $T$ iterations is $\sum_{t=1}^T x_{a(t)}(t)$.
The commonly used performance measure for bandit algorithms is regret.
Regret compares the total reward obtained by the algorithm against a ``best possible'' reward ``OPT'' after some number of time steps $T$. Different types of regret compare the algorithm's result to different notions of the optimal result.

We can define a form of regret where OPT is allowed to pick any arm in $[K]$ at each time step. For later reference we will refer to this as full regret, defined as follows:
\begin{align*}\displaystyle  R_{full}(T) = \sum_{t=1}^T \max_{i \in [K]} E\Big[x_i(t)\Big] - E\Big[\sum_{t=1}^T x_{a(t)}(t)\Big] \end{align*}

The above definition uses what is commonly known as pseudo-regret, rather than expected regret. For the rest of this paper, we will often refer to pseudo-regret as simply ``regret''. Expectations are taken over the possible randomness of the algorithm and adversary.

In most studies of adversarial bandits, a weaker definition of regret is used. This is because full regret uses too powerful an adversary, and it is impossible to achieve better than linear expected full regret in the worst case (we include a proof in 
Appendix \ref{appendix:lowerboundonfullregret}).
Therefore, it is common to define a notion of regret where OPT is required to use the same arm for all $T$ time steps. We refer to this as weak regret, defined as follows:
\begin{align*}\displaystyle  R_{weak}(T) = \max_{i \in [K]}\sum_{t=1}^T E\Big[x_i(t)\Big] - E\Big[\sum_{t=1}^T x_{a(t)}(t)\Big] \end{align*}

Weak regret however, severely limits what OPT can do, and being competitive with an algorithm that can only pick one arm and stick to it may not be a very strong result.

\subsection{Periodic Regret}

We can bridge the two with a periodic definition of regret. Taking the idea that a periodic choice of arms is likely to perform well in situations with periodic patterns, we can define a regret function which measures how competitive an algorithm is with the best periodic choice of arms.
For example, we can say OPT is forced to play the same arm every $\tau \in \mathbb{N}$ steps. This defines a regret function as follows, 
\begin{align*}\displaystyle  R_\tau(T) = \sum_{\ell=1}^\tau \max_{i \in [K]}\sum_{\tbar=0}^{\floor{\frac{T-\ell}{\tau}}} E\Big[x_i(\tbar\tau + \ell)\Big] - E\Big[\sum_{t=1}^T x_{a(t)}(t)\Big] \end{align*}
As OPT may optionally still pick the same arm on all time steps, this is a generalization of weak regret. This makes for a regret value in between weak regret and full regret. 

If we were competing against the regret for a specific, known value of $\tau$, this would be equivalent to playing $\tau$ independent instances of the adversarial bandits problem over approximately $T/\tau$ time steps each. By playing $\tau$ separate instances of an algorithm for weak regret, and by Theorem \ref{theorem_partitionlowerbound} in Section \ref{subsection:lower_bound_single_partition}, we have an upper/lower bound of $\Theta(\sqrt{\tau K T})$.

However, if we were to consider that the ``best possible'' period $\tau$ may not be known (for example, if OPT were to consist of the best periodic function for any of the possible periods $\tau \in \{1,\cdots,P\}$), these bounds do not apply as easily.

\subsection{Generalized Periodic Regret}
\label{section_generalizedperiodicregret}

A generalization of the periodic case is the use of partition functions. Fix a maximum number of labels $P$. We define this upper bound $P$ for use in our analysis later on. A partition function $f : [T] \to [P]$ is a function that assigns every time step a label from $1$ to $P$.
We consider two partition functions the same if their choice of label assignments are permutations of each other.
The regret under function $f$ would be when OPT is forced to play the same arm for all timesteps with the same label as assigned by $f$.
\begin{align}\displaystyle  R_f(T) = \sum_{\ell \in [P]} \max_{i \in [K]} \sum_{\tbar \in f^{-1}(\ell)} E\Big[x_i(\tbar)\Big] - E\Big[\sum_{t=1}^T x_{a(t)}(t)\Big]
\label{expr:single_function_periodic_regret}
\end{align}

Consider a set of partition functions $F \subseteq \{f : [T] \to [P]\}$ for some $P \in \mathbb{N}$. $F$ is necessarily finite. The regret under the function set $f$ would be when OPT can choose to play using any of the partition functions in $F$. This gives the following regret definition:
\begin{align}\displaystyle  R_F(T) = \max_{f \in F}\sum_{\ell \in [P]} \max_{i \in [K]} \sum_{\tbar \in f^{-1}(\ell)} E\Big[x_i(\tbar)\Big] - E\Big[\sum_{t=1}^T x_{a(t)}(t)\Big]
\label{expr:generalized_periodic_regret}
\end{align}

This definition (\ref{expr:generalized_periodic_regret}) of periodic regret gives us more choice in how we want to define our potential periodic patterns to learn, through deciding on the labels on each time step for each function. We demonstrate this with our choice of partition functions in Section \ref{section:evaluation}.

To model the example described earlier with periods $\tau \in \{1,2,\cdots,P\}$, we can use the set of partitions $F = \{f_1,f_2,\cdots,f_P\}$, where $f_\tau(t) := (t \text{ mod } \tau) + 1 $ for each $t \in [T]$, $\tau \in [P]$.

\section{The \algoname Algorithm} \label{section:proposedAlgorithm}          We discuss the relationship between our generalized periodic setting and the problem of bandits with expert advice, and hence the applicability of EXP4 \cite{auer2002nonstochastic} to the problem. We use this to introduce \algoname, an efficient algorithm for generalized periodic regret.

\subsection{Applying Bandits with Expert Advice to Periodic Bandit Problems}


Periodic bandit problems can be reduced to the problem of bandits with expert advice. In the problem of bandits with expert advice, we are given a set $\Pi$ of $N$ experts. Each expert predicts an arm on each time step. We fix the number of time steps $T$. Thus an expert can be seen as a function $\pi: [T] \to [K]$. An algorithm to solve this problem would make use of each expert's predictions on each time step, to obtain a reward competitive with the best expert in the set. This gives us the following regret definition:
\begin{align*}\displaystyle  R_\Pi(T) = \max_{\pi \in \Pi}\sum_{t=1}^T x_{\pi(t)}(t) - E\Big[\sum_{t=1}^T x_{a(t)}(t)\Big] \end{align*}
This can be used to model all of the above notions of regret. For full regret, we have $\Pi := \{\pi: [T] \to [K]\}$, the set of all possible functions from $[T]$ to $[K]$. For weak regret, $\Pi$ is the set of all constant functions from $[T]$ to $[K]$.

In the generalized periodic setting, let $F$ be the set of partition functions $f : [T] \to [P]$. For each function $f \in F$, let $\Theta_f$ be the set of all possible mappings $\theta: f([T]) \to [K]$ from the image set $f([T])$ of $f$ to the set of arms $[K]$ (thus $|\Theta_f| = K^{|f([T])|}$). Each composition $\theta \circ f$, $f \in F$, $\theta \in \Theta_f$ thus represents a possible mapping of the time steps $[T]$ to arms. Thus, for the generalized periodic setting, $\Pi = \{\theta \circ f \mid f \in F, \theta \in \Theta_f\}$.

We note that when $\Pi_1 \subseteq \Pi_2$, we will have $R_{\Pi_1}(T) \leq R_{\Pi_2}(T)$. Let $\Pi_{full}$, $\Pi_{weak}$ and $\Pi_{F}$ be the sets of functions corresponding to full regret, weak regret and generalized periodic regret under some function set $F$ respectively. Thus, for any nonempty set $F$ of partition functions, we have $R_{\Pi_{weak}}(T) \leq R_{\Pi_{F}}(T) \leq R_{\Pi_{full}}(T)$.

An existing algorithm for this problem is the EXP4 algorithm \cite{auer2002nonstochastic}, which achieves a regret upper bound of $O(\sqrt{KT\log N})$, where $N := |\Pi|$. We can thus apply EXP4 directly to our problem. However, a commonly cited drawback of the EXP4 algorithm is that its running time and memory cost are at least linear in $N$. This is an issue as $N$ is often very large. For example, in the generalized periodic setting, the size of $N$ could easily be on the order of $|F|K^P$, which is exponential in $P$. However, we show below that in the generalized periodic setting, we can devise an algorithm that is distributionally equivalent to EXP4 and can be made to run in time polynomial in $|F|$, $K$ and $P$.

The EXP4 algorithm works by assigning a weight $w_\pi$ (with initial value $1$) to each expert $\pi \in \Pi$. The probability $p_i(t)$ of playing an arm $i \in [K]$ would then be ${\sum_{\pi(t)=i}w_\pi(t)}/{\sum_\pi w_\pi(t)}$, the ratio of the combined weights of the experts agreeing to play arm $i$ to the total weight of the experts. Whenever an arm $i \in [K]$ is played, each expert who suggested arm $i$ will have their weight adjusted by some factor $\exp(\frac{\gamma}{K}x_i(t)/p_i(t))$.
More details on EXP4 are given in \cite{auer2002nonstochastic}.
Note that it discusses a more general form of expert advice where each expert suggests a probability vector on the arms. However, we only require the case where at each time step, each expert suggests one arm with probability $1$, and all other arms with probability $0$.

\subsection{\algoname, Memory and Running Time Costs}
\algoname (Algorithm \ref{alg:groupedexp4}) is distributionally equivalent to the EXP4 algorithm when run with the set of experts $\Pi = \{\theta \circ f \mid f \in F, \theta \in \Theta_f\}$. The key intuition behind this algorithm is that the generalized periodic setting produces many symmetries in the weight computation for each expert. Specifically, we take advantage of how for each partition function $f$, the set of experts contains every possible combination of arm assignments to labels in the image set $f([T])$. This allows us to compute the probabilities that EXP4 would play each arm at each time step without computing the individual weights of every expert.
\begin{algorithm}[t!]
\caption{\algoname} \label{alg:groupedexp4}
\begin{algorithmic}[1]
\Procedure{Initialization}{}
\ForEach {$f \in F$}
    \ForEach {$\ell \in f([T])$}
        \ForEach {$i \in [K]$}
            \State Initialize $b_i^{\ell,f}(1) = 1$
        \EndFor
    \EndFor
\EndFor
\EndProcedure

\Procedure{Algorithm}{}
\ForEach {time step $t = 1,2,\cdots,T$}
    \ForEach {$i \in [K]$}
        \State $\ds r_i(t) := \sum_{f \in F} \Big( b_i^{f(t),f} (t) \prod_{\ell \in f([t]) \setminus \{f(t)\}}\sumjk b_j^{\ell,f}(t) \Big)$
    \EndFor

    \ForEach {$i \in [K]$}
        \State $\ds p_i(t) = \frac{r_i(t)}{\sumjk r_j(t)}$
    \EndFor

    \State Play arm $i_t \in [K]$ from the probabilities $p_1(t), p_2(t), \cdots, p_K(t)$
    \State Obtain reward $x_{i_t}(t)$

    \ForEach {$f \in F$}
        \ForEach {$\ell \in f([T])$}
            \ForEach {$i \in [K]$}
                \If {$i = i_t$ and $\ell = f(t)$}
                    \State $b_i^{\ell,f}(t+1) = b_i^{\ell,f}(t) \exp(\gammK x_i(t)/p_i(t))$
                \Else
                    \State $b_i^{\ell,f}(t+1) = b_i^{\ell,f}(t)$
                \EndIf
            \EndFor
        \EndFor
    \EndFor
\EndFor
\EndProcedure
\end{algorithmic}
\end{algorithm}

For brevity, let $P_f := |f([T])|$ be the number of labels used by the function $f$. Necessarily $P_f \leq P$.
The memory requirement is $O(K\sum_{f \in F} P_f)$, which is at most $O(KP|F|)$. A naive implementation of the algorithm gives a running time of $O(K^2 \sum_{f \in F} P_f)$ per time step, but with some pre-computation, the running time can be lowered as shown in
Appendix \ref{appendix_optimizedperiodicexp4}.

\subsection{Correctness of \algoname}
\label{proofOfCorrectness}

To show correctness, we show that our algorithm produces the same probability distribution over arms as EXP4 in every time step.
Define $\pi_{\theta,f}$ as the expert which at time $t$ recommends arm $\theta\circ f(t)$ with probability $1$ and all other arms with probability $0$. We show this algorithm is distributionally equivalent to EXP4, where $\Pi = \{\pi_{\theta, f} | f \in F, \theta \in \Theta_f\}$.
In EXP4, each expert $\pi_{\theta,f}$ would have some weight $w_{\theta,f}(t)$ at time step $t$. At time step $t$, EXP4 plays arm $i$ with probability $p_i(t)$ represented by the following expression:
\begin{align*}
\ds p_i(t) = \frac{\sum_{f \in F, \theta \in \Theta_f, \theta\circ f(t) = i} w_{\theta,f}(t)}{\sum_{f \in F, \theta \in \Theta_f} w_{\theta,f}(t)}
\end{align*}
Thus, to show that the two algorithms are distributionally equivalent, as $p_i(t) := {r_i(t)}/{\sum_{j=1}^{K}r_j(t)}$ in our algorithm, for each successive time step $t$, we only need to show the following:
\begin{align*}
\ds r_i(t) = \sum_{f \in F, \theta \in \Theta_f, \theta\circ f(t) = i} w_{\theta,f}(t)
\end{align*}
The details of this derivation is given in
Appendix \ref{appendix_proofofcorrectness}.
We can thus formally state a regret upper bound as follows (Theorem \ref{theorem:upper_bound}). This upper bound comes directly from EXP4's regret bound of $O(\sqrt{KT\log N})$, where the number of experts $\ds N = \sum_{f \in F}K^{|f([T])|} \leq |F|K^P$.

\begin{theorem}
\label{theorem:upper_bound}
With $K$ arms, $T$ time steps, $|F|$ partition functions, with every function having at most $P$ labels, \algoname gives a regret upper bound of $O\big(\sqrt{PKT \log K  + KT \log |F|}\big)$.
\end{theorem}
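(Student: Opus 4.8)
The plan is to obtain this bound as an immediate consequence of two ingredients already in place: the distributional equivalence between \algoname and EXP4, and the known regret guarantee of EXP4. Since \algoname produces at every time step exactly the same probability distribution over arms as EXP4 run on the expert set $\Pi = \{\pi_{\theta,f} \mid f \in F, \theta \in \Theta_f\}$ (the correctness argument of Section~\ref{proofOfCorrectness}), the two algorithms induce the same distribution over the whole sequence of played arms $a(1),\ldots,a(T)$ against any possibly adaptive adversary. Their expected rewards therefore coincide, and because the comparator term $\max_{\pi\in\Pi}\sum_t x_{\pi(t)}(t)$ is the same object for both, the generalized periodic regret $R_F(T) = R_\Pi(T)$ of \algoname equals the expert-regret of EXP4 on $\Pi$. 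Hence any EXP4 regret bound transfers verbatim.

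First I would invoke EXP4's guarantee, $O(\sqrt{KT \log N})$ with $N := |\Pi|$. Next I would count experts: for each partition function $f$ the number of distinct label-to-arm assignments is $|\Theta_f| = K^{|f([T])|}$, so $N = \sum_{f\in F} K^{|f([T])|}$. Using the hypothesis that each $f$ uses at most $P$ labels, i.e. $|f([T])| \le P$, every summand is at most $K^P$, giving $N \le |F|\,K^P$. The final step is then purely algebraic: taking logarithms yields $\log N \le \log|F| + P\log K$, whence
\begin{align*}
O\big(\sqrt{KT \log N}\big) = O\big(\sqrt{KT(P\log K + \log|F|)}\big) = O\big(\sqrt{PKT \log K + KT \log|F|}\big),
\end{align*}
which is exactly the claimed bound.

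The main obstacle lies not in the displayed inequality—which, given the two cited ingredients, reduces to counting experts and one line of algebra—but in justifying that the equivalence and the EXP4 bound genuinely apply to the quantity we care about. In particular, one must confirm that the comparator class driving EXP4's guarantee is precisely $\Pi_F = \{\theta\circ f \mid f\in F, \theta\in\Theta_f\}$, so that EXP4's expert-regret really is the generalized periodic regret $R_F(T)$ of (\ref{expr:generalized_periodic_regret}) rather than some coarser or finer comparator. A secondary subtlety is checking that EXP4's published analysis is valid in the regime we use here: a possibly adaptive adversary, and the particular learning rate $\gamma$ that \algoname inherits. Assuming EXP4's analysis covers this regime—which it does, since the distributional equivalence means \algoname's play sequence is indistinguishable from a genuine EXP4 execution—no further work is needed, and the theorem follows as a corollary.
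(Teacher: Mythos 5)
Your proposal is correct and follows exactly the paper's own argument: the bound is obtained by combining the distributional equivalence of \algoname with EXP4 on the expert set $\Pi = \{\theta \circ f \mid f \in F, \theta \in \Theta_f\}$, EXP4's regret bound of $O(\sqrt{KT\log N})$, and the expert count $N = \sum_{f\in F} K^{|f([T])|} \leq |F|K^P$, so that $\log N \leq P\log K + \log|F|$. The additional care you take in identifying the comparator class with the generalized periodic regret of (\ref{expr:generalized_periodic_regret}) is exactly the reduction the paper sets up in Section~\ref{section:proposedAlgorithm}, so nothing further is needed.
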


\section{Lower Bounds} \label{section:theoreticalAnalysis}
In this section, we provide lower bounds for the case of a single partition and for a set of partitions. We demonstrate that the upper and lower bounds differ by a factor of $\log K$.

The existing regret lower bound for the problem of bandits with expert advice \cite{seldin2016lower} is $\Omega\big(\sqrt{KT\frac{\log N}{\log K}}\big)$. This lower bound is derived by dividing the time steps $[T]$ into $\frac{\log N}{\log K}$ equal parts.
For the generalized periodic setting, as this lower bound uses an instance that can be modeled with a single partition function, it does not give immediate insight into whether having multiple different periods or partition functions increases the difficulty of the problem.

\subsection{Lower Bound for a Single Partition}
\label{subsection:lower_bound_single_partition}
We consider the case with only a single partition function $f : [T] \to [P]$, which partitions the time steps into $P$ labels $1,2,\cdots,P$. The sizes of the partitions are $|f^{-1}(1)|,$ $|f^{-1}(2)|,$ $\cdots,$ $|f^{-1}(P)|$ respectively. It seems like intuitively, by seeing this as $P$ separate instances of the weak regret setting, and by the existing $\Theta(\sqrt{KT})$ upper/lower bounds on weak regret \cite{auer2002nonstochastic, audibert2009minimax}, we would have an upper/lower bound of $\Theta(\sum_{\ell = 1}^P \sqrt{K |f^{-1}(\ell)|})$. For equally sized partitions of size approximately $\frac{T}{P}$ each, this bound would be $\Theta(\sqrt{PKT})$.

However, while the upper bound is clearly met by running $P$ independent instances of an algorithm for weak regret, the lower bound is less clear. Even when considering it as $P$ separate instances, there is a possibility of an algorithm ``reacting'' to losses in other instances to play differently in the current instance, obtaining a higher total reward as a result. For completeness, we include a proof for the lower bound (Theorem \ref{theorem_partitionlowerbound}) in
Appendix \ref{appendix_partitionlowerbound}

\begin{theorem}
\label{theorem_partitionlowerbound}
Fix a partition function $f:[T] \to [P]$ which assigns a label to each time step.
Assume that for each $\ell \in f([T])$, there are at least $K/(4\ln \frac{4}{3})$ time steps with label $\ell$.
Then the minimax pseudo-regret (\ref{expr:single_function_periodic_regret}), over all algorithms $a$ and adversaries $R$, has a lower bound as follows, for some positive constant $c$: 
\begin{align*}
\ds \inf_a \sup_R \Big( \max_{\theta \in \Theta_f} E\Big[\sum_{t \in [T]} x_{\theta \circ f(t)}(t)\Big] - E\Big[\sum_{t \in [T]} x_{a(t)}(t)\Big] \Big) \geq \sum_{\ell \in [P]}\sqrt{cK|f^{-1}(\ell)|}
\end{align*}
\end{theorem}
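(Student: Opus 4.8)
The plan is to prove the lower bound by the standard information-theoretic (KL-divergence) technique for adversarial bandits, adapted to respect the partition structure. I would fix a random stochastic adversary and invoke the minimax inequality $\inf_a \sup_R(\cdots) \ge \inf_a E_{R\sim\mathcal{D}}(\cdots)$, so that it suffices to exhibit a distribution $\mathcal{D}$ over reward sequences on which every algorithm suffers the claimed expected regret. For the construction I plant one hidden ``good'' arm per label: for each $\ell$, independently draw a uniformly random $i^*_\ell \in [K]$, and on each time step $t$ with $f(t)=\ell$ let arm $i^*_\ell$ pay a Bernoulli$(1/2+\epsilon_\ell)$ reward while every other arm pays Bernoulli$(1/2)$, for a gap $\epsilon_\ell$ chosen later. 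Writing $n_\ell := |f^{-1}(\ell)|$ and letting $N_j^\ell$ be the number of times the algorithm plays arm $j$ on label-$\ell$ steps, the best comparator $\theta\in\Theta_f$ maps $\ell$ to $i^*_\ell$, so a short computation shows the expected regret equals $\sum_\ell \epsilon_\ell\big(n_\ell - E[N_{i^*_\ell}^\ell]\big)$. The whole task then reduces to upper bounding, for each $\ell$, the averaged count $\frac{1}{K}\sum_{j} E_j[N_j^\ell]$, where $E_j$ denotes expectation conditioned on $i^*_\ell=j$.

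The key step, which resolves the obstacle flagged in the text, is a per-partition change of measure against a \emph{null} reference environment $Q_\ell$ in which label $\ell$ carries no signal (all arms Bernoulli$(1/2)$ on label-$\ell$ steps) while every other label keeps its planted structure. Because the planted arms are independent across labels, the observed trajectory under $Q_\ell$ differs from that under $E_j$ only through the label-$\ell$ rewards of arm $j$; hence the divergence-decomposition chain rule gives $\mathrm{KL}(Q_\ell\,\|\,E_j) = E_{Q_\ell}[N_j^\ell]\cdot \mathrm{kl}(1/2\,\|\,1/2+\epsilon_\ell)$, and Pinsker's inequality yields $E_j[N_j^\ell] - E_{Q_\ell}[N_j^\ell] \le n_\ell\sqrt{\tfrac12 \mathrm{KL}(Q_\ell\|E_j)}$. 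The observation that defeats the ``reacting across instances'' worry is that, no matter how the algorithm lets other labels influence its label-$\ell$ behavior, it still makes exactly $n_\ell$ plays on label-$\ell$ steps, so $\sum_j E_{Q_\ell}[N_j^\ell]=n_\ell$ deterministically. Summing the Pinsker bound over $j$, applying Cauchy--Schwarz to get $\sum_j\sqrt{E_{Q_\ell}[N_j^\ell]} \le \sqrt{K n_\ell}$, and using $\mathrm{kl}(1/2\,\|\,1/2+\epsilon_\ell)\le c_1\epsilon_\ell^2$ give $\frac{1}{K}\sum_j E_j[N_j^\ell] \le \frac{n_\ell}{K} + \epsilon_\ell n_\ell\sqrt{c_1 n_\ell/(2K)}$.

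It then remains to optimize the gaps. Substituting back, the averaged per-partition regret is at least $\epsilon_\ell n_\ell\big(\tfrac12 - \epsilon_\ell\sqrt{c_1 n_\ell/(2K)}\big)$ once $K\ge 2$, and choosing $\epsilon_\ell = c_2\sqrt{K/n_\ell}$ for a small absolute constant $c_2$ forces the parenthesized factor to be at least $1/4$, giving a contribution of $\Omega(\sqrt{K n_\ell})$, i.e. at least $\sqrt{cK|f^{-1}(\ell)|}$. Summing over $\ell$ produces the stated bound $\sum_{\ell\in[P]}\sqrt{cK|f^{-1}(\ell)|}$. The hypothesis that each label owns at least $K/(4\ln\frac43)$ time steps is precisely what keeps $\epsilon_\ell$ below the threshold (near $1/4$) on which both the clean KL estimate $\mathrm{kl}(1/2\|1/2+\epsilon_\ell)\le c_1\epsilon_\ell^2$ and the validity of the perturbation $1/2+\epsilon_\ell\le 1$ depend.

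I expect the main obstacle to be exactly the cross-partition coupling the paper warns about: an algorithm sees all rewards and could in principle use label-$\ell'$ outcomes to decide its label-$\ell$ plays. The subtle point is that this coupling is harmless because (i) the planted arms are independent across labels, so outcomes on other labels are uninformative about $i^*_\ell$ and contribute nothing to $\mathrm{KL}(Q_\ell\|E_j)$, and (ii) the total number of label-$\ell$ plays is fixed at $n_\ell$ regardless of the algorithm's cross-label strategy, which is what keeps the Cauchy--Schwarz step valid. Making this rigorous at the level of trajectory distributions --- conditioning on the other labels' planted arms and verifying that the chain-rule decomposition of the KL restricts to label-$\ell$ plays of arm $j$ --- is the technically delicate part, though routine once set up correctly.
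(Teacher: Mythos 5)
Your proposal is correct, but it takes a genuinely different route from the paper's proof. The paper proves Theorem~\ref{theorem_partitionlowerbound} by a black-box reduction: it invokes the known $\sqrt{cKT}$ weak-regret lower bound (Theorem~\ref{theorem_ckt}, from \cite{auer2002nonstochastic, seldin2016lower}), runs an independent copy of that hard randomized \emph{oblivious} adversary on each label, and kills the cross-label ``reacting'' concern with a simulation argument: if some algorithm achieved regret below $\sqrt{cKT_\ell}$ on label $\ell$, one could restrict it to a single-instance algorithm that internally samples the other labels' rewards (possible precisely because those adversaries are oblivious, have known distributions, and are independent of label $\ell$), contradicting Theorem~\ref{theorem_ckt}. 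You instead inline the information-theoretic content of that cited theorem: you re-derive the per-label bound from scratch via the planted-Bernoulli construction, a per-label null environment $Q_\ell$, the divergence decomposition, Pinsker, and Cauchy--Schwarz, and you neutralize cross-label adaptivity by the two observations that other labels contribute zero KL signal about $i^*_\ell$ and that $\sum_j N_j^\ell = n_\ell$ holds deterministically. Since the Seldin--Lugosi adversary is essentially your planted construction, the hard instance is morally the same; the difference is how the per-label bounds are combined. The paper's reduction is shorter, modular, and inherits its constant from the cited theorem; yours is self-contained, makes explicit \emph{why} independence across labels is the operative fact, and gives direct control of constants (and would adapt more readily to refinements such as label-dependent gaps). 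One point you should make rigorous beyond the step you already flag: because the pseudo-regret (\ref{expr:single_function_periodic_regret}) takes the maximum over $\theta$ \emph{outside} the expectation, you must treat each realization of the planted arms as a separate stochastic adversary and lower bound $\sup_R$ by the average of per-realization pseudo-regrets; your computation with the realization-dependent comparator $\theta(\ell)=i^*_\ell$ does exactly this, but if the randomness over $i^*$ were folded into a single randomized adversary, the comparator's expected reward on label $\ell$ would collapse to $n_\ell(1/2+\epsilon_\ell/K)$ and the argument would fail.
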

If we consider the simple case where OPT may play only periodic functions from any period $\tau \in \{1,2,\cdots,P\}$, it can do no worse than if it were only allowed to play at period $P$. We thus obtain a lower regret bound of $\sqrt{PKT}$.

\subsection{Lower Bound for the Generalized Periodic Setting}
\label{subsection:general_lower_bound}
Let $F$ be the set of partitions, so $|F|$ is the number of partitions. Let $P$ be the maximum number of labels of any partition in $F$.
For sufficiently large $T$ and $K \leq P$, we obtain a pseudo-regret(\ref{expr:generalized_periodic_regret}) lower bound of $\Omega(\sqrt{PKT} + \sqrt{KT \frac{\log |F|}{\log K}})$.
It is proved in
Appendix \ref{appendix_generalizedlowerbound}

If $P < K$ instead, a simple lower bound can be obtained by using only $P$ out of the $K$ arms, so we obtain a problem with $P$ arms and maximum partition size $P$. This gives us a lower bound of $\Omega\Big(\sqrt{PKT} + \sqrt{PT \frac{\log |F|}{\log P}}\Big)$.
We can then merge these two lower bounds into a single expression
$\Omega\Big(\sqrt{PKT} + \sqrt{\min(P,K)T \frac{\log |F|}{\log \min(P,K)}}\Big)$.


\subsection{Analysis of Bounds}
A conclusion we can make from Section \ref{subsection:general_lower_bound} is that having multiple periods indeed increases the difficulty of the problem - we have obtained a lower bound higher than the known upper bound of $O(\sqrt{PKT})$ had only one partition function of the maximum period $P$ been used.

With $K$ arms, $T$ time steps, $|F|$ partition functions, with every function having at most $P$ labels, \algoname gives an upper bound of $O\big(\sqrt{PKT \log K  + KT \log |F|}\big)$.
On the other hand, we have a lower bound of $\Omega\big(\sqrt{PKT + KT \frac{\log |F|}{\log K}} \big)$ in the case where $K \leq P$. This gives a gap of $\sqrt{\log K}$ between the two bounds.
Interestingly, this log-factor is the same as the current gap between the upper and lower bounds in the problem of bandits with expert advice. This is possibly because we use a similar lower bound proof to the problem of bandits with expert advice \cite{seldin2016lower}, as well as a similar algorithm for the upper bound.

\section{Experimental Evaluation} \label{section:evaluation}
In this section, we discuss the implementation details of \algoname and parameter values chosen, evaluate the algorithm via simulation, and compare its performance to EXP3 \cite{auer2002nonstochastic}. We show how \algoname (a) learns periodic patterns over time under both discrete and continuous changes in network data rates, (b) outperforms EXP3,
(c) is robust to noisy patterns, and (d) adapts to changes due to mobility of users.

We benchmark against ``Optimal Random'', a player with prior knowledge of the actual bandwidths of each network.
In each time slot, it picks a network from a probability distribution equal to the ratios of the bandwidths. For example, with network bandwidths $4,10$ and $6$, the probability of picking the networks will be $0.2$, $0.5$ and $0.3$, respectively.

All the algorithms are implemented in Python, using SimPy \cite{simpy}, while the core algorithm is written in C++.
We use a time-varying learning rate $\gamma = t^{-\frac{1}{10}}$ \cite{maghsudi2013relay} for both \algoname and EXP3; $\gamma$ slowly tends to zero to ensure convergence \cite{tekin2011performance} while at the same time ensures that the algorithm does not take too long to learn (it learns slowly when $\gamma$ is very small).
Although they are not pre-requirements of \algoname, for simplicity, we assume that (a) a network’s bandwidth is equally shared among its clients, and (b) devices are time-synchronized.
To reduce numerical error in our simulations, we substitute computations of $\sum_{x\in Y}\exp(x)$ with $\exp(\max_{x\in Y}x)$. 
In nearly all cases, sums of exponentials in our algorithm are heavily dominated by a single term, making the values of the two expressions approximately equal. Experimentally, we find that this has negligible effects on the values computed within the algorithm.

We do simulations on synthetic data.
We consider setups with 20 mobile devices and 3 wireless networks, unless otherwise specified. While the number of devices remain constant throughout the simulation run, the data rates and availability of networks may change.
We assume that a network selection is performed once every minute; hence, 1440 time slots is one simulated day. All results presented are from 20 simulation runs, of 86,400 time slots each (i.e., 2 simulated months). The pattern of network behavior and/or user mobility over the first 1440 time slots is repeated 60 times; we refer to each repetition as an \enquote{iteration}. 

We apply \algoname in the generalized periodic setting.
We define a partition function of period $\tau$ as one which divides each iteration of 1440 time slots into $\tau$ equal contiguous segments, labeled $1$ to $\tau$ in chronological order. The same labels are used for each successive repetition.
Unless otherwise specified, we use the period set $\{1, \cdots , 24\}$. This refers to using 24 partition functions, of periods $1$ to $\tau$ respectively.

\subsection{Evaluation Criteria}
Good assignments of devices to networks divide the available bandwidth evenly among the devices.
We thus evaluate the performance of the algorithms based on the lowest data rate observed by any of the devices. We compare this to the optimal allocation of devices, which maximizes the lowest data rate observed by any device. 
If a device with the lowest data rate observes 3Mbps, but the optimal's lowest is 5Mbps, we say it loses 40\% of its achievable gain.
We refer to this percentage loss as the ``distance to optimal minimum'' in our results.

We do not use average cumulative gain as a performance measure because in our problem setting, average gain is maximized as long as there is at least one user in each network.

\subsection{Performance Comparison of Algorithms}
We consider two setups, both at an office with two WiFi networks and a cellular network. The data rates of these networks vary over time.
The first setup involves discrete changes in network bandwidths at fixed time intervals (Figure \ref{subfigure:discrete_data_rate}).
In the second setup, the data rates vary continuously with time (Figure \ref{subfigure:continuous_data_rate}).
Figures \ref{subfigure:distanceToNE_discrete_data_rate} and \ref{subfigure:distanceToNE_continuous_data_rate} show that in both setups, the distance to optimal minimum of \algoname drops over time while EXP3 shows no noticeable improvement with time.

\begin{figure*}[!htb]
    \centering
    \begin{subfigure}[t]{0.5\textwidth}
        \centering
        \includegraphics [scale=0.7]
{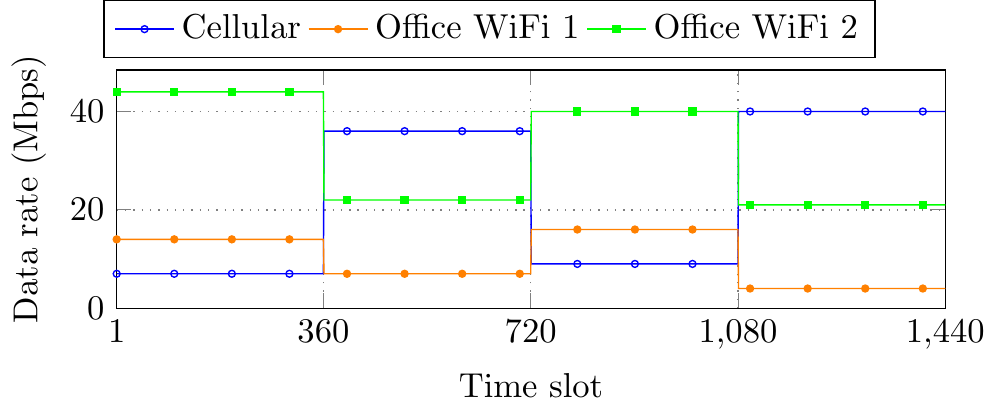}
        \caption{Discrete changes in network data rates.}
        \label{subfigure:discrete_data_rate}
    \end{subfigure}%
    ~ 
    \begin{subfigure}[t]{0.5\textwidth}
        \centering
        \includegraphics [scale=0.7]
{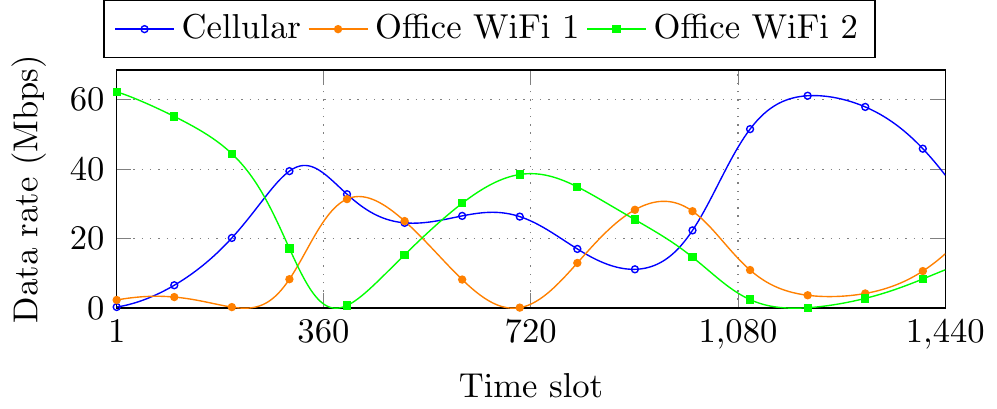}
        \caption{Continuous changes in network data rates.}
        \label{subfigure:continuous_data_rate}
    \end{subfigure}
    \caption{Changes in network data rates over one iteration (this is repeated 60 times).}
    \label{figure:learning_pattern_data_rates}
\end{figure*}

\begin{figure*}[!htb]
    \centering
    \begin{subfigure}[t]{0.5\textwidth}
        \centering
        \includegraphics [scale=0.7]
        {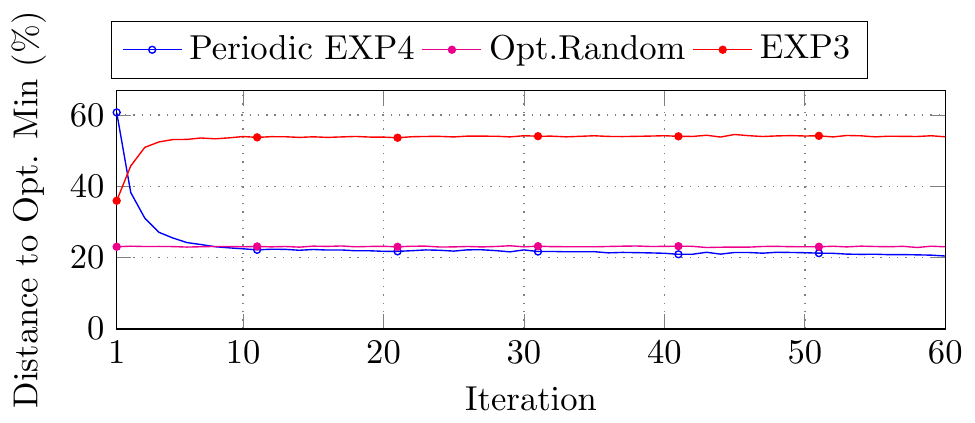}
        \caption{Performance under discrete setup.}
        \label{subfigure:distanceToNE_discrete_data_rate}
    \end{subfigure}%
    ~ 
    \begin{subfigure}[t]{0.5\textwidth}
        \centering
        \includegraphics [scale=0.7]
        {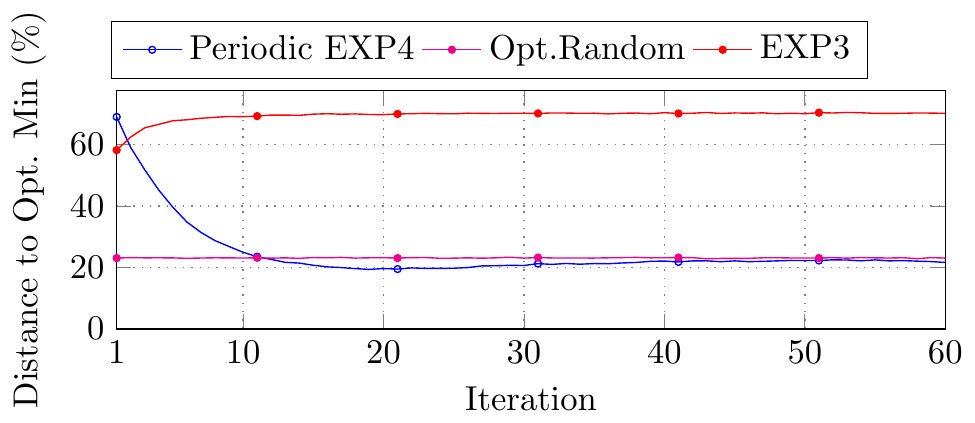}
        \caption{Performance under continuous setup.}
        \label{subfigure:distanceToNE_continuous_data_rate}
    \end{subfigure}
    \caption{Distance to optimal minimum of \algoname and EXP3 over 60 iterations.}
    \label{figure:learning_pattern}
\end{figure*}

\begin{figure*}[!htb]
    \centering
    \begin{subfigure}[t]{0.8\textwidth}
        \centering
        \includegraphics [scale=0.65]
{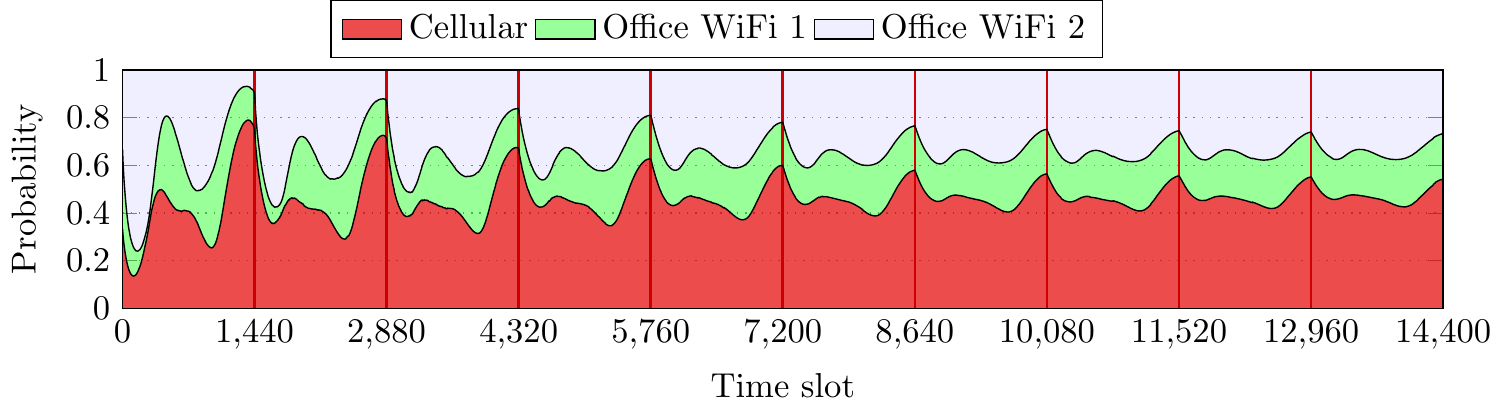}
        \caption{EXP3: Combined probabilities for each network over first 10 iterations.}
        \label{subfigure:total_probability_EXP3}
    \end{subfigure}%
    ~~
    \begin{subfigure}[t]{0.2\textwidth}
        \centering
        \includegraphics [scale=0.65]
{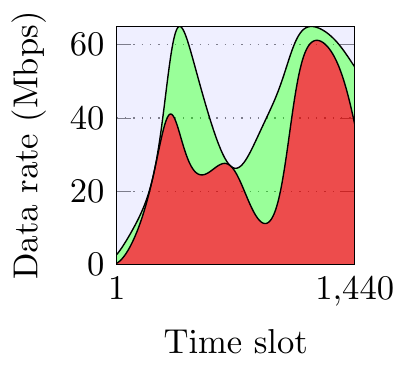}
        \caption{Bandwidth ratio}
        \label{subfigure:bandwidth_ratios}
    \end{subfigure}
    ~
    \begin{subfigure}[t]{\textwidth}
        \centering
        \includegraphics [scale=0.65]
{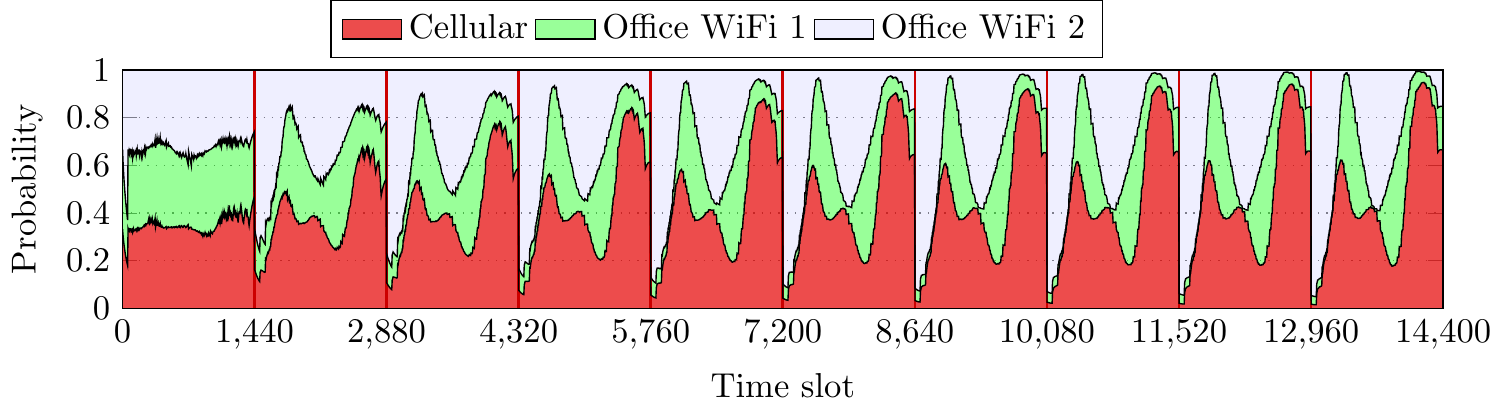}
        \caption{\algoname: Combined probabilities for each network over first 10 iterations.}
        \label{subfigure:total_probability_periodicEXP3}
    \end{subfigure}
    \caption{Area chart showing the time variation of combined probabilities in the continuous setup. Figure \ref{subfigure:bandwidth_ratios} shows the actual ratio of the bandwidths of the three networks within any one iteration.}
    \label{figure:total_probability_plots}
\end{figure*}

Figure \ref{figure:total_probability_plots} for the continuous setup explains this improvement. The figure for the discrete setup is in
Appendix \ref{subsection:learning_patterns}.
At each time step, each user has a probability of picking each of the networks.
If we consider the combined probability of picking each network, we can see that in \algoname, these probabilities converge towards the ratios of the bandwidths of the networks (Figures \ref{subfigure:total_probability_periodicEXP3}). This is despite the continuous setup having no obvious best period. On the other hand, EXP3's probabilities slowly flatten out (Figure \ref{subfigure:total_probability_EXP3}). This is consistent with what we would expect, as EXP3 seeks to be competitive with the best fixed-action policy, meaning that it only seeks out the best fixed arm to play.

\begin{figure}[!htb]
\begin{center}
\includegraphics [scale=0.9]
{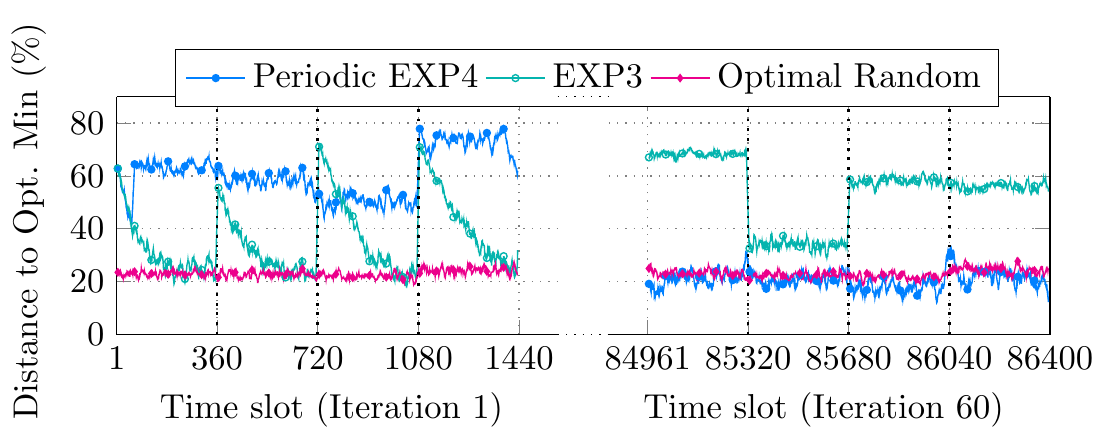}
\end{center}
\caption{Distances to Optimal minimum in the first and last repetitions of the discrete setting in Figure \ref{subfigure:discrete_data_rate}. Vertical lines indicate points where data rates change.}
\label{figure:algo_comparison_distanceToNashEquilibrium}
\end{figure}

Figure \ref{figure:algo_comparison_distanceToNashEquilibrium} shows that while EXP3 initially learns more quickly, \algoname eventually outperforms EXP3 (which converges to the network with the best average performance), with a performance similar to Optimal Random.
From our experiments, we find that while all algorithms have similar total cumulative gains, we may note that \algoname is fairer than EXP3, with significantly lower variance. We present these results in
Appendix \ref{subsection:performance_comparison}.

\subsection{Other Experiments}
In
Appendix \ref{section:appendix_experiments},
we discuss a few more experiments, the results of which are briefly summarized as follows:
\begin{enumerate}
\item \textbf{Performance in Noisy Settings:}
On each time step, we apply a 10\% Gaussian noise to each of the networks' data rates. We find that our algorithms are largely unaffected by noise in the data, giving similar results with and without noise.

\item \textbf{Comparison of Period Sets:}
We do a comparison between different possible period sets $F$. We find that the algorithm learns more slowly with larger period sets (e.g. $\{1,2,\cdots,45\}$, as compared to $\{1,2,\cdots,15\}$), but can converge to better results on more complex instances (instances where the bandwidth may fluctuate more wildly).
    
\item \textbf{Mobility of Users:}
We consider a setup where users move around and have access to different sets of networks at different times.
We compare Vanilla \algoname, which is oblivious to networks possibly becoming unavailable, against an optimized version,
which selects only from the set of currently available networks. While the optimized version initially yields a better performance, they eventually perform equally well when the Vanilla \algoname algorithm learns the pattern.
\end{enumerate}


    



\section{Conclusion} \label{section:conclusion}
In this paper, we develop an efficient variant of EXP4 for the periodic bandit problem, give nearly matching upper and lower bounds for it, and demonstrate its advantages in learning periodic behavior in the context of the network selection problem.

An interesting issue raised in contrasting this paper and \cite{nonstationary2014, nonstationary2017} is whether non-stationary bandit problems are better modeled stochastically or adversarially.
While these papers address non-stationary rewards primarily in a stochastic setting with some adversarial aspects, we tackle the periodic bandit problem in a fully adversarial setting. 
Using the adversarial setting has the benefit of not placing any constraints on the adversary;
we adapt to the periodic setting only through our definition of regret.
A proper comparison of stochastic and adversarial methods for network selection is a possible future line of work.

%


\bibliography{ms}

\newpage
\appendix
\section{Appendix: Theoretical Results and Proofs} \label{section:appendix}
    \subsection{Lower bound on Worst Case Full Regret}
\label{appendix:lowerboundonfullregret}
We construct a proof using a deterministic oblivious adversary for full generality. Proofs using a randomized oblivious adversary or adaptive adversary are simpler. 
A deterministic adversary must select the full sequence of rewards prior to the first round. This is in contrast to an adaptive adversary, a more powerful adversary which is allowed to select rewards each round with full knowledge of the outcomes of random events occurring prior to the round.

Let $T$ be the number of time steps and $K$ be the number of arms.
Fix an algorithm $a$. We show that there exists a problem instance (a predetermined sequence of rewards for each arm) such that algorithm $a$ obtains an expected full pseudo-regret of at least $T\frac{K-1}{K}$.

We construct a problem instance which, for each time step $t$ from $1$ to $T$, has one arm $b_t \in [K]$ which gives a reward of $1$, while all other arms give a reward of $0$. We construct each $b_t$ based on the algorithm (but not on the algorithm's choices) inductively as follows:

At the start of the algorithm, the algorithm plays arms with probabilities $p_1,\cdots p_K$ respectively. Define $b_1$ to be the arm with the lowest probability of being played.

We maintain the following invariant with parameter $\tau \in [T]$ - when running the algorithm $a$ on the constructed problem instance from time steps $1$ to $\tau$, the expected total reward $E[\sum_{t=1}^{\tau}x_{a(t)}(t)]$ by the algorithm is at most $\frac{\tau}{K}$. With the definition of $b_1$ above, we can see that this invariant holds for $\tau=1$.

Now fix any later time step $\tau \in [T]$. The algorithm $a$'s decision on time step $\tau$ can only be based on past rewards and the sequence of arms played by the algorithm on time steps up to $\tau-1$. As our choices of $b_1$ to $b_{\tau-1}$ are fixed, the only randomness comes from the algorithm's choices of arms up to this point. Let $\alpha=(a_1,a_2,\cdots,a_\tau)$ represent a possible sequence of arms played by the algorithm for the first $\tau-1$ time steps. This event occurs with probability $Q_\alpha$, and the algorithm would accumulate a total reward of $R_\alpha$. Assuming by induction that the invariant holds on time steps up to $\tau-1$, we can state that $\sum_\alpha Q_\alpha R_\alpha \leq \frac{\tau-1}{K}$.

Now on time step $\tau$, based on its past choices of arms $\alpha$, the algorithm constructs a probability vector $p_1^\alpha, p_2^\alpha, \cdots p_K^\alpha$ representing the probabilities of playing each arm on time step $\tau$. Now we construct $b_\tau \in [K]$ independently of $\alpha$. For any fixed $b_\tau$, the expected reward by the algorithm after time step $\tau$ is given by:
\begin{align*}
E[\sum_{t=1}^{\tau}x_{a(t)}(t)]
= \sum_\alpha Q_\alpha(R_\alpha + 1 \times p_{b_\tau}^\alpha)
= \sum_\alpha Q_\alpha R_\alpha + \sum_\alpha Q_\alpha p_{b_\tau}^\alpha
\end{align*}
As we have the following:
\begin{align*}
\sum_\alpha Q_\alpha p_1^\alpha + \sum_\alpha Q_\alpha p_2^\alpha + \cdots + \sum_\alpha Q_\alpha p_K^\alpha
= \sum_\alpha Q_\alpha (p_1^\alpha + p_2^\alpha + \cdots + p_K^\alpha)
= \sum_\alpha Q_\alpha\times 1
= 1
\end{align*}
there must exist a $b_\tau \in [K]$ such that $\sum_\alpha Q_\alpha p_{b_\tau}^\alpha \leq \frac{1}{K}$.
By selecting such a $b_\tau$, we can conclude that:
\begin{align*}
E[\sum_{t=1}^{\tau}x_{a(t)}(t)]
= \sum_\alpha Q_\alpha R_\alpha + \sum_\alpha Q_\alpha p_{b_\tau}^\alpha
\leq \frac{\tau-1}{K} + \frac{1}{K} = \frac{\tau}{K}
\end{align*}
completing the inductive proof. Thus, we can lower bound the full pseudo-regret as follows:
\begin{align*}\displaystyle  R_{full}(T) = \sum_{t=1}^T \max_{i \in [K]} E_a\Big[x_i(t)\Big] - E_a\Big[\sum_{t=1}^T x_{a(t)}(t)\Big]
\geq \Big(\sum_{t=1}^T 1\Big) - \frac{T}{K} = T\frac{K-1}{K} \end{align*}

\subsection{Optimized \algoname}
\label{appendix_optimizedperiodicexp4}

In this section, we give an optimized implementation of \algoname to show that the running time bounds can be improved upon with some pre-computation. With this optimization, we have a running time of $O(K\sum_{f \in F}P_f)$ for pre-computation, and $O(K|F|)$ per time step later on. However, we note that such an implementation can potentially increase the amount of numerical error.
The reduction in running time largely comes from optimizing the following computation via the introduction of variables $S_f^\ell(t)$ and $B_f(t)$:
\begin{align*}
r_i(t) := \sum_{f \in F} \Big( b_i^{f(t),f} (t) \prod_{\ell \in f([t]) \setminus \{f(t)\}}\sumjk b_j^{\ell,f}(t) \Big)
\end{align*}
\begin{algorithm}
\caption{\algoname}
\begin{algorithmic}[1]
\Procedure{Initialization}{}
\ForEach {$f \in F$}
    \ForEach {$\ell \in f([T])$}
        \ForEach {$i \in [K]$}
            \State Initialize $b_i^{\ell,f}(1) = 1$
        \EndFor
        \State Initialize $S_f^\ell(1) = K$ \Comment we maintain $\ds S_f^\ell(t) = \sumjk b_j^{\ell,f}(t)$
    \EndFor
    \State Initialize $B_f(1) = K^{|f([T])|}$ \Comment we maintain $\ds B_f(t) = \prod_{\ell \in f([t])} S_f^\ell(t)$
\EndFor
\EndProcedure

\Procedure{Algorithm}{}
\ForEach {time step $t = 1,2,\cdots,T$}
    \ForEach {$i \in [K]$}
        \State $\ds r_i(t) := \sum_{f \in F} b_i^{f(t),f} (t) \times B_f(t) / S_f^{f(t)}(t)$ \Comment running time $O(K|F|)$
    \EndFor

    \ForEach {$i \in [K]$}
        \State $\ds p_i(t) = \frac{r_i(t)}{\sumjk r_j(t)}$ \Comment running time $O(K)$
    \EndFor

    \State Play arm $i_t \in [K]$ from the probabilities $p_1(t), p_2(t), \cdots, p_K(t)$
    \State Obtain reward $x_{i_t}(t)$

    \ForEach {$f \in F$}
        \ForEach {$\ell \in f([T])$}
            \ForEach {$i \in [K]$}
                \If {$i = i_t$ and $\ell = f(t)$}
                    \State $b_i^{\ell,f}(t+1) = b_i^{\ell,f}(t) \exp(\gammK x_i(t)/p_i(t))$ \Comment running time $O(|F|)$
                \Else
                    \State $b_i^{\ell,f}(t+1) = b_i^{\ell,f}(t)$ \Comment implemented as no-op
                \EndIf
            \EndFor
        \EndFor
        
        \State $S_f^\ell(t) := S_f^\ell(t) + b_{i_t}^{\ell,f}(t) - b_{i_t}^{\ell,f}(t-1)$ \Comment running time $O(|F|)$
        \State $B_f(t) := B_f(t-1) \times S_f^\ell(t) / S_f^\ell(t-1)$ \Comment running time $O(|F|)$
    \EndFor
\EndFor
\EndProcedure
\end{algorithmic}
\end{algorithm}

\subsection{Correctness of \algoname}
\label{appendix_proofofcorrectness}
In this section, we complete the proof of correctness of \algoname as mentioned in Section \ref{proofOfCorrectness}.
As described before, we show that our algorithm produces the same probability distribution over arms as EXP4 in every time step.
In EXP4, $\pi_{\theta,f}$ is the expert which at time $t$ recommends arm $\theta\circ f(t)$ with probability $1$ and all other arms with probability $0$. We show that \algoname is distributionally equivalent to EXP4, where $\Pi = \{\pi_{\theta, f} | f \in F, \theta \in \Theta_f\}$.
In EXP4, Each expert $\pi_{\theta,f}$ would have some weight $w_{\theta,f}(t)$ at time step $t$. At time step $t$, EXP4 plays arm $i$ with probability $p_i(t)$ represented by the following expression:
\begin{align*}
\ds p_i(t) = \frac{\sum_{f \in F, \theta \in \Theta_f, \theta\circ f(t) = i} w_{\theta,f}(t)}{\sum_{f \in F, \theta \in \Theta_f} w_{\theta,f}(t)}
\end{align*}
To show that the two algorithms are distributionally equivalent, as $p_i(t) := {r_i(t)}/{\sum_{j=1}^{K}r_j(t)}$ in \algoname, for each successive time step $t$, we only need to show the following:
\begin{align*}
\ds r_i(t) = \sum_{f \in F, \theta \in \Theta_f, \theta\circ f(t) = i} w_{\theta,f}(t) \tag*{$r_i(t)$ is defined in Algorithm \ref{alg:groupedexp4}.}
\end{align*}
We first note that for each $f \in [F]$, $\ell \in f([T])$, $i \in [K]$, from the way $b_i^{\ell,f}(t+1)$ is defined in \algoname (Algorithm \ref{alg:groupedexp4}), we have the following expression:
\begin{align*}
b_i^{\ell,f}(t+1) = \prod_{s\in f^{-1}(\ell)\cap[t]} \exp \big(\frac{\gamma}{K}\xhat_i(s)\big) = \exp \big(\frac{\gamma}{K} \sum_{s\in f^{-1}(\ell)\cap[t]} \xhat_i(s)\big)
\end{align*}
We then note that in EXP4,
\begin{align}
\ds w_{\theta,f}(t+1)& = \exp\Big(\gammK \sum_{s \in [t]} \xhat_{\theta \circ f(s)}(s)\Big) \tag*{}\\
& = \exp\Big(\gammK \sum_{\ell \in f([t])} \sum_{s \in f^{-1}(\ell)\cap[t]} \xhat_{\theta \circ f(s)}(s)\Big) \tag*{Divide up $[t]$ by label}\\
& = \exp\Big(\gammK \sum_{\ell \in f([t])} \sum_{s \in f^{-1}(\ell)\cap[t]} \xhat_{\theta (\ell)}(s)\Big) \tag*{As $\ell = f(s)$}\\
& = \prod_{\ell \in f([t])} \exp\Big(\gammK \sum_{s \in f^{-1}(\ell)\cap[t]} \xhat_{\theta (\ell)}(s)\Big) \tag*{}\\
& = \prod_{\ell \in f([t])} b_{\theta(\ell)}^{\ell,f}(t+1)
\label{eqn:proof_w}
\end{align}
where $b_{\theta(\ell)}^{\ell,f}(t)$ comes from \algoname (Algorithm \ref{alg:groupedexp4}).
We then note that:
\begin{align*}
\ds \sum_{f \in F, \theta \in \Theta_f, \theta\circ f(t) = i} w_{\theta,f}(t) = \sum_{f \in F}\Big(\sum_{\theta \in \Theta_f, \theta\circ f(t) = i} w_{\theta,f}(t)\Big)
\end{align*}

\noindent and that, (the last step is as $\Theta_f$ contains every function $\theta : f([T]) \to [K]$)
\begin{align*}
\sum_{\theta \in \Theta_f, \theta\circ f(t) = i} w_{\theta,f}(t)
& = \sum_{\theta \in \Theta_f, \theta\circ f(t) = i} \prod_{\ell \in f([t])} b_{\theta(\ell)}^{\ell,f}(t) \tag*{By (\ref{eqn:proof_w})}\\
& = \sum_{\theta \in \Theta_f, \theta\circ f(t) = i} b_{\theta \circ f(t)}^{f(t),f}(t) \prod_{\ell \in f([t])\setminus \{f(t)\}} b_{\theta(\ell)}^{\ell,f}(t) \tag*{Extract current label}\\
& = b_i^{f(t),f}(t) \sum_{\theta \in \Theta_f, \theta\circ f(t) = i} \prod_{\ell \in f([t])\setminus \{f(t)\}} b_{\theta(\ell)}^{\ell,f}(t) \\
& = b_i^{f(t),f}(t) \prod_{\ell \in f([t])\setminus\{f(t)\}} \sumjk b_j^{\ell,f}(t)
\end{align*}

Thus we have,
\begin{align*}
\ds \sum_{f \in F, \theta \in \Theta_f, \theta\circ f(t) = i} w_{\theta,f}(t) = \sum_{f \in F} \Big( b_i^{f(t),f}(t) \prod_{\ell \in f([t])\setminus\{f(t)\}} \sumjk b_j^{\ell,f}(t) \Big) = r_i(t)
\end{align*}
This shows that the expression for $r_i(t)$ we have defined in \algoname corresponds to the sum of weights of all the ``experts'' $\pi_{\theta, f}$ from EXP4 which agree to play arm $i$ on time step $t$. Thus this concludes the proof of distributional equivalence between the algorithms.

\subsection{Worst Case Regret Lower Bound on a Single Partition}
\label{appendix_partitionlowerbound}
This is based on the pseudo-regret in the setting with a single partition function (\ref{expr:single_function_periodic_regret}).

We give a proof for Theorem \ref{theorem_partitionlowerbound}.
This proof bears many similarities with the proof of a $\Omega\big(\sqrt{KT\frac{\log N}{\log K}}\big)$ lower bound for the problem of bandits with expert advice in \cite{seldin2016lower}.

We make use of a modified formulation by \cite{seldin2016lower} of a theorem originally presented in \cite{auer2002nonstochastic}.
\begin{theorem}{\cite{seldin2016lower, auer2002nonstochastic}}
\label{theorem_ckt}
Assume that the number of time steps $T \geq K/(4\ln \frac{4}{3})$. Then there exists a randomized oblivious adversary $R$, such that for algorithms $a$,
\begin{align*}
\ds \inf_a \Big( \max_{i \in [K]} E\Big[\sum_{t \in [T]} x_{i}(t)\Big] - E\Big[\sum_{t \in [T]} x_{a(t)}(t)\Big] \Big) \geq \sqrt{cKT}
\end{align*}
\end{theorem}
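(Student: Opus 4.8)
The plan is to exhibit a single randomized oblivious adversary $R$ against which every algorithm suffers pseudo-regret $\Omega(\sqrt{KT})$, using the classical ``needle in a haystack'' construction. First I would let $R$ draw a hidden arm $i^* \in [K]$ uniformly at random and then, obliviously, generate rewards so that arm $i^*$ is i.i.d.\ $\mathrm{Bernoulli}(\tfrac12 + \epsilon)$ while every other arm is i.i.d.\ $\mathrm{Bernoulli}(\tfrac12)$, where $\epsilon \in (0,\tfrac12)$ is a parameter I fix at the end. The expectation is taken jointly over the adversary's and the algorithm's randomness, and the benchmark is the mean of the realized best arm, which conditioned on $i^*$ is $\tfrac12 + \epsilon$ while all others are $\tfrac12$. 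Writing $N_j$ for the number of times arm $j$ is played in $T$ steps, the algorithm's expected reward is $T/2 + \epsilon\,E[N_{i^*}]$, so the per-instance pseudo-regret is $\epsilon\big(T - E[N_{i^*}]\big)$. Averaging over the uniform choice of $i^*$ — which is exactly the expected regret under the randomized adversary, and hence applies uniformly to all algorithms $a$ once we take $\inf_a$ — reduces the theorem to showing that $\frac{1}{K}\sum_{j\in[K]} E_j[N_j]$ cannot be much larger than $T/K$, where $E_j$ denotes expectation under the instance in which arm $j$ is good.

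The core of the argument is an information-theoretic comparison between each ``good-arm-$j$'' instance and the null instance in which every arm is $\mathrm{Bernoulli}(\tfrac12)$; write $E_0$ for expectation under the null and $P_j, P_0$ for the induced laws on the observed history. Since the two instances differ only in the reward law of arm $j$, the chain rule for relative entropy gives $D_{\mathrm{KL}}(P_0 \,\|\, P_j) = E_0[N_j]\cdot \mathrm{kl}\!\big(\tfrac12 \,\|\, \tfrac12+\epsilon\big)$, the only contributions coming from the rounds in which arm $j$ is pulled. Applying Pinsker's inequality to bound total variation, and using that $N_j \le T$, I would obtain
\begin{align*}
E_j[N_j] - E_0[N_j] \;\le\; T\sqrt{\tfrac12\, E_0[N_j]\,\mathrm{kl}\!\big(\tfrac12\,\|\,\tfrac12+\epsilon\big)}.
\end{align*}
Summing over $j$, using $\sum_{j} E_0[N_j] = T$ with Cauchy--Schwarz (so $\sum_j \sqrt{E_0[N_j]} \le \sqrt{KT}$) and the elementary estimate $\mathrm{kl}(\tfrac12\,\|\,\tfrac12+\epsilon) = O(\epsilon^2)$, yields $\frac{1}{K}\sum_j E_j[N_j] \le \frac{T}{K} + O\big(\epsilon\, T\sqrt{T/K}\big)$.

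Substituting back, the averaged pseudo-regret is at least $\epsilon\big(T(1-\tfrac1K) - O(\epsilon\,T\sqrt{T/K})\big)$. I would then optimize by setting $\epsilon = c_0\sqrt{K/T}$ for a small absolute constant $c_0$: the linear term contributes $\Theta(c_0\sqrt{KT})$ while the quadratic correction contributes only $\Theta(c_0^2\sqrt{KT})$, so for $c_0$ small enough the regret is at least $\sqrt{cKT}$ for some positive constant $c$. The hypothesis $T \ge K/(4\ln\frac43)$ is precisely what guarantees that this choice of $\epsilon$ stays below the threshold at which the $O(\epsilon^2)$ estimate of $\mathrm{kl}(\tfrac12\,\|\,\tfrac12+\epsilon)$ and the Pinsker step remain valid.

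I expect the main obstacle to be the step that makes the information-theoretic comparison rigorous: one must argue carefully that $N_j$ is a deterministic function of the observed history so that the chain-rule/data-processing decomposition applies, that only the $E_0[N_j]$ rounds spent on arm $j$ enter the divergence, and that Pinsker is applied to the correct pair of history laws $P_0, P_j$. Tracking the absolute constants through the $\mathrm{kl}$ estimate and the final optimization — so as to certify a clean constant $c$ and to see exactly why the stated lower bound on $T$ is the right one — is the most delicate piece of bookkeeping.
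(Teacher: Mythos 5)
Your proposal is correct and is essentially the same argument that underlies the paper's treatment: the paper does not prove Theorem~\ref{theorem_ckt} itself, but imports it from \cite{seldin2016lower, auer2002nonstochastic}, and the proof in those sources is exactly your construction --- a uniformly random good arm rewarded $\mathrm{Bernoulli}(\frac{1}{2}+\epsilon)$ versus $\mathrm{Bernoulli}(\frac{1}{2})$ elsewhere, the KL chain rule giving $D_{\mathrm{KL}}(P_0\,\|\,P_j)=E_0[N_j]\,\mathrm{kl}(\frac{1}{2}\,\|\,\frac{1}{2}+\epsilon)$, Pinsker's inequality, Cauchy--Schwarz over the $K$ arms, and the choice $\epsilon=\Theta(\sqrt{K/T})$, with the hypothesis $T\geq K/(4\ln\frac{4}{3})$ keeping $\epsilon$ in the range where these estimates are valid (tracking constants along this route is what produces the paper's quoted $c=(\sqrt{2}-1)/\sqrt{32\ln(4/3)}$). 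One subtlety you resolved correctly, and which is worth making explicit: your benchmark is the instance-conditional one (pseudo-regret conditioned on the draw of $i^*$, then averaged over $i^*$), which is what the cited sources actually prove and what the paper's later applications need; under the literal reading of the displayed statement, where the expectation in $\max_{i\in[K]}E\big[\sum_{t}x_i(t)\big]$ also averages over the adversary's randomization, the claim would be false for every randomized oblivious adversary, since the constant algorithm that always plays an arm attaining that maximum has pseudo-regret exactly zero.
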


We note that this randomized oblivious adversary $R$ picks arms independently of the choices made by algorithm $a$. Details on the construction of this adversary $R$ are given in \cite{seldin2016lower}. $c = {(\sqrt{2}-1)}/{\sqrt{32\ln (4/3)}}$ is a constant independent of any parameter.

We now proceed with the proof of Theorem \ref{theorem_partitionlowerbound}.
For each label $\ell \in [P]$, we consider a bandit problem of length $T_\ell := |f^{-1}(\ell)|$. If we assume each $T_\ell \geq K/(4\ln \frac{4}{3})$, then by Theorem \ref{theorem_ckt}, there exists an adversary $R_\ell$ such that for any algorithm $a_\ell$ running on a bandit problem of length $T_\ell$,
\begin{align*}
\max_{i \in [K]} E\Big[\sum_{t \in [T_\ell]} x_{i}(t)\Big] - E\Big[\sum_{t \in [T_\ell]} x_{a_\ell(t)}(t)\Big] \geq \sqrt{cKT_\ell}
\end{align*}
We now construct an adversary $R$ for a bandit problem of length $T$ on partition function $f$. For each time step $t$, let $\ell := f(t)$. The adversary $R$ takes $R_\ell$'s advice to generate a randomized reward vector for time step $t$.

Now consider any algorithm $a$ for a bandit problem of length $T$ on partition function $f$, and run it against the adversary $R$. Suppose there exists a label $\ell \in [P]$ such that:
\begin{align*}
\max_{i \in [K]} E\Big[\sum_{t \in f^{-1}(\ell)} x_{i}(t)\Big] - E\Big[\sum_{t \in f^{-1}(\ell)} x_{a(t)}(t)\Big] < \sqrt{cKT_\ell}
\end{align*}
We can then consider a ``restriction'' $a_\ell$ of algorithm $a$ that plays on a bandit problem of length $T_\ell := |f^{-1}(\ell)|$. This algorithm $a_\ell$ would play exactly what algorithm $a$ would play on the $T_\ell$ time steps of label $\ell$, while simulating $a$'s plays against adversary $E$ internally on all other time steps. Therefore, against the adversary $R_\ell$, the algorithm $a_\ell$ would achieve a pseudo-regret under $\sqrt{cKT}$, which contradicts our choice of adversary $R_\ell$.

We can thus conclude that:
\begin{align*}
&\max_{\theta \in \Theta_f} E\Big[\sum_{t \in [T]} x_{\theta \circ f(t)}(t)\Big] - E\Big[\sum_{t \in [T]} x_{a(t)}(t)\Big]\\ &= \max_{(i_1,i_2,\cdots,i_P) \in [K]^P} \sum_{\ell \in [P]}\Big( E\Big[\sum_{t \in f^{-1}(\ell)} x_{i_\ell}(t)\Big] - E\Big[\sum_{t \in f^{-1}(\ell)} x_{a(t)}(t)\Big]\Big) \\
&= \sum_{\ell \in [P]} \max_{i \in [K]} E\Big[\sum_{t \in f^{-1}(\ell)} x_{i}(t)\Big] - E\Big[\sum_{t \in f^{-1}(\ell)} x_{a(t)}(t)\Big] \\
&\geq \sum_{\ell \in [P]} \sqrt{cKT_\ell}
\end{align*}

\subsection{Lower Bound on Worst Case Generalized Periodic Regret}
\label{appendix_generalizedlowerbound}
We show a lower bound on the worst-case pseudo-regret in the generalized periodic setting (\ref{expr:generalized_periodic_regret}) based on $T$, $K$, $|F|$ and $P$.
In order to make use of Theorem \ref{theorem_partitionlowerbound} later on in the proof, we first make the base assumption that $T$ is sufficiently large. Specifically, we require that $T \geq \Big(\frac{\log |F|}{\log K} + K - \frac{\log 0.5}{\log K}\Big)\times K/(4\ln \frac{4}{3})$.

Fix any integer $M \geq P$.
Suppose $K \leq P$ (more labels than arms). We split the time steps $[T]$ into $M$ equally sized sections. Now we let $F$ be the set of all partitions of $[M]=\{1,2,\cdots,M\}$ into $K$ parts. As $K \leq P$, we can define $F$ this way. Each partition in $F$ assigns one label in $[K]=\{1,2,\cdots,K\}$ to each of the $M$ sections.

\begin{lemma}
This set of partitions $F$ covers all possible ways to assign a different arm to each of the $M$ parts.
\end{lemma}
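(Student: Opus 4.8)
The plan is to prove the covering claim in its natural form: reading each partition at the level of the $M$ sections (on which every $f \in F$ is constant), I want to show that the family $\{\theta \circ f \mid f \in F,\ \theta \in \Theta_f\}$ realizes \emph{every} function $g : [M] \to [K]$, i.e. every assignment of one of the $K$ arms to each of the $M$ sections is achieved by some partition together with some label-to-arm map. I will identify a partition with its section-level labeling $f : [M] \to [K]$ (up to permutation of labels, as stipulated earlier), and prove the nontrivial inclusion, since the reverse containment is immediate.

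First I would fix an arbitrary target assignment $g : [M] \to [K]$ and form the partition of $[M]$ into the nonempty level sets $g^{-1}(1), \dots, g^{-1}(K)$; this has $r := |g([M])| \le K$ blocks, each of which is a maximal set of sections receiving the same arm under $g$. Next I would refine this coarse partition into one with exactly $K$ nonempty blocks by repeatedly splitting some block into two nonempty pieces, which is possible precisely because $M \ge K$ (and here $M \ge P \ge K$). Writing $f$ for the resulting $K$-block partition, we have $f \in F$, and by construction every block of $f$ is contained in a single level set of $g$. Finally I would define $\theta \in \Theta_f$ by sending each label of $f$ to the common $g$-value of the sections in that block; this is well defined exactly because each block lies inside one level set of $g$, and it gives $\theta \circ f = g$ on every section, so $g$ is realized.

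The step requiring the most care, and the main (mild) obstacle, is the refinement to exactly $K$ nonempty blocks: I must guarantee that we can always reach $K$ blocks without breaking the ``each block sits inside one level set of $g$'' property, which reduces to the counting fact $M \ge K$. I also rely on $\Theta_f$ ranging over \emph{all} maps $f([T]) \to [K]$, so $\theta$ is allowed to be non-injective and may collapse several labels onto a single arm whenever $g$ is not surjective. (If one instead reads ``$K$ parts'' as ``at most $K$ parts,'' the refinement is unnecessary and one simply takes $f$ to be the level-set partition of $g$ with $\theta$ the corresponding relabeling; the argument above subsumes this easier case.)
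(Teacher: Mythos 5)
Your proof is correct and follows essentially the same route as the paper's: given a target assignment $g : [M] \to [K]$, pass to the partition of $[M]$ into the level sets of $g$ and recover $g$ by relabelling. In fact you are slightly more careful than the paper, whose one-line argument lists the pre-images $\pi^{-1}(1), \dots, \pi^{-1}(K)$ and gives ``each a separate label'' without addressing non-surjective $\pi$ --- under the exactly-$K$-nonempty-parts reading that matches the paper's count $|F| = S(M,K)$, the bare level-set partition then fails to lie in $F$; your refinement step (splitting blocks until there are exactly $K$, valid since $M \geq P \geq K$), combined with letting $\theta$ be non-injective, closes precisely that small gap.
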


\begin{proof}
Consider any assignment $\pi: [M] \to [K]$, representing each possible assignment of arms to the $M$ parts. We have a partition function $f \in F$ that partitions $[M]$ into the pre-image sets $\pi^{-1}(1), \pi^{-1}(2), \cdots, \pi^{-1}(K)$ of $\pi$ by assigning each a separate label. The labels can then be assigned to arms accordingly to represent $\pi$.
\end{proof}

As OPT can choose an arm for each of the $M$ sections independently, we obtain a regret lower bound of $\Omega(M \times \sqrt{K \frac{T}{M}}) = \Omega(\sqrt{MKT})$ by Theorem \ref{theorem_partitionlowerbound}. Now we express $M$ in terms of $|F|$. As $|F|$ is the number of partitions of $[M]$ into $K$ parts, we have $|F| = S(M,K)$, where $S(M,K)$ refers to the Stirling numbers of the second kind. Using the upper and lower bounds for $S(M,K)$ from \cite{rennie1969stirling}, we can bound $|F|$, and thus $M$, as follows:
\begin{align*}
& \frac{1}{2} \binom{M}{K} K^{M-K} \leq |F| \leq \frac{1}{2}(K^2 + K + 2)K^{M-K-1} - 1 \\
&\implies \frac{1}{2}K^{M-K} \leq |F| \leq \frac{1}{2}(K^2 + K^2 + 2K^2)K^{M-K-1} = 2K^{M-K+1} \\
&\implies \frac{\log |F|}{\log K} + (K-1) - \frac{\log 2}{\log K} \leq M \leq \frac{\log |F|}{\log K} + K - \frac{\log 0.5}{\log K}
\end{align*}

This expression validates our use of Theorem \ref{theorem_partitionlowerbound}, as this with our assumption implies that $T/M \geq K/(4\ln \frac{4}{3})$, and thus for any $f \in F$, there are at least $K/(4\ln \frac{4}{3})$ time steps associated with each label.
From the bounds on $M$, we have that $M = \Theta (\frac{\log |F|}{\log K} + K)$. This thus gives us a lower bound of $\Omega\Big(\sqrt{KT(\frac{\log|F|}{\log K} + K)}\Big)$.

Now, we can add a single extra partition function to $F$, that partitions the time steps $[T]$ into $P$ equal parts. With the use of Theorem \ref{theorem_partitionlowerbound} again (note that $T/P \geq T/M \geq K/4(\ln \frac{4}{3})$), the presence of this partition function in $F$ makes our lower bound no smaller than $\Omega(\sqrt{PKT})$. Therefore, we obtain a final lower bound of:
\begin{align*}
&\Omega\Big(\sqrt{PKT} + \sqrt{KT (\frac{\log |F|}{\log K} + K)}\Big) \\
=& \Omega\Big(\sqrt{PKT} + \sqrt{KT \frac{\log |F|}{\log K}}\Big) \text{ (as $K \leq P$)}.
\end{align*}

If $P < K$ instead, a simple lower bound can be obtained by using only $P$ out of the $K$ arms, so we obtain a problem with $P$ arms and maximum partition size $P$. This gives us a lower bound of $\Omega\Big(\sqrt{PKT} + \sqrt{PT \frac{\log |F|}{\log P}}\Big)$.
We can then merge these two lower bounds into a single expression
$\Omega\Big(\sqrt{PKT} + \sqrt{\min(P,K)T \frac{\log |F|}{\log \min(P,K)}}\Big)$.

\section{Appendix: Experiments} \label{section:appendix_experiments}
    \subsection{Learning Patterns}
\label{subsection:learning_patterns}

Figure \ref{figure:disc_total_probability_plots} illustrates how the average probabilities of the three networks vary with time in the discrete setup. The conclusions we can draw are the same as with the continuous setup. We only include this figure here for completeness.

\begin{figure*}[!htb]
    \centering
    \begin{subfigure}[t]{0.8\textwidth}
        \centering
        \includegraphics [scale=0.7]
{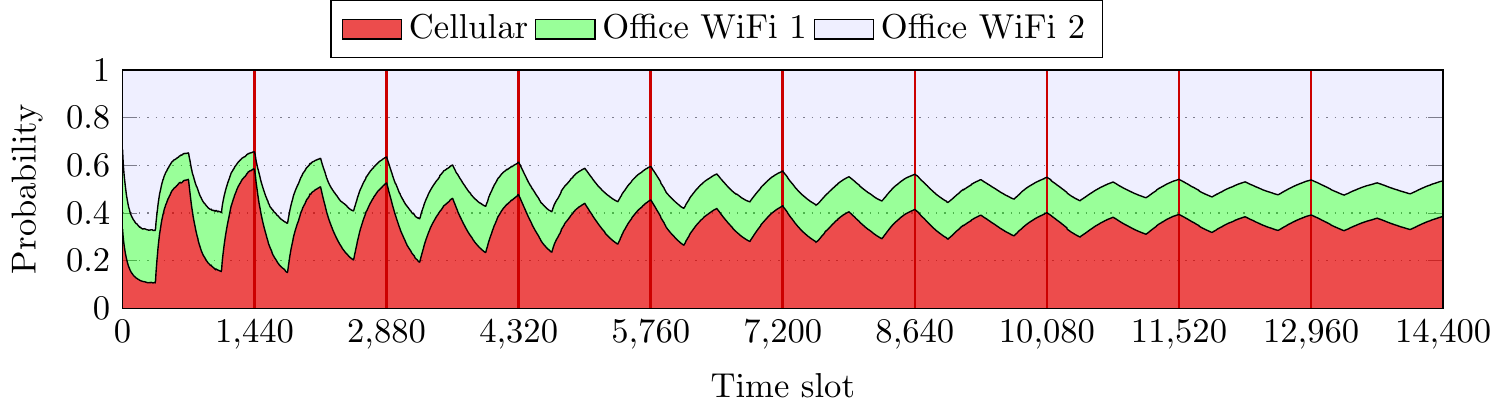}
        \caption{EXP3: Combined probabilities of each network over first 10 iterations.}
        \label{subfigure:disc_total_probability_EXP3}
    \end{subfigure}%
    ~~
    \begin{subfigure}[t]{0.2\textwidth}
        \centering
        \includegraphics [scale=0.7]
{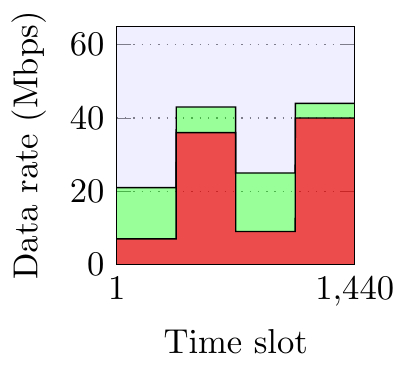}
        \caption{Bandwidth ratio}
        \label{subfigure:disc_bandwidth_ratios}
    \end{subfigure}
    ~
    \begin{subfigure}[t]{\textwidth}
        \centering
        \includegraphics [scale=0.7]
{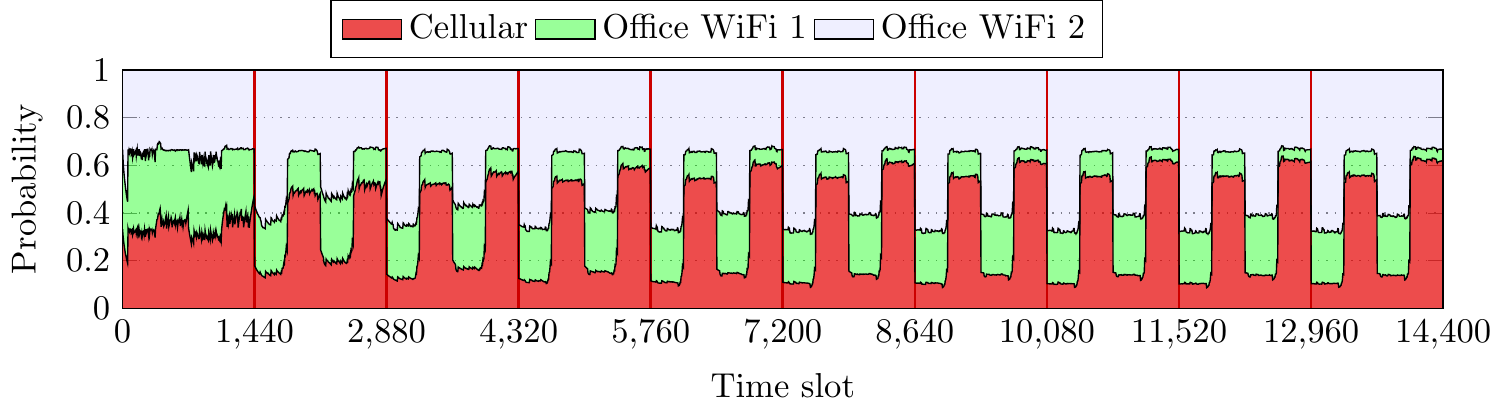}
        \caption{\algoname: Combined probabilities of each network over first 10 iterations.}
        \label{subfigure:disc_total_probability_periodicEXP3}
    \end{subfigure}
    \caption{Area chart showing the time variation of combined probabilities in the discrete setup. Figure \ref{subfigure:disc_bandwidth_ratios} shows the actual ratio of the bandwidths of the three networks within any one iteration.}
    \label{figure:disc_total_probability_plots}
\end{figure*}

\subsection{Noisy Settings}
\label{subsection:noisy_settings}
    In this experiment, we apply a 10\% Gaussian noise to each of the networks' data rates in the instances given in Figure \ref{figure:learning_pattern_data_rates}. These noisy instances are illustrated in \ref{fig:noisy_instances}. Figure \ref{fig:score_noisy_instances} shows that \algoname is largely unaffected by noise in the data, giving similar results with and without noise.

\begin{figure*}[!htb]
    \centering
    \begin{subfigure}[t]{0.5\textwidth}
        \centering
        \includegraphics [scale=0.7]
{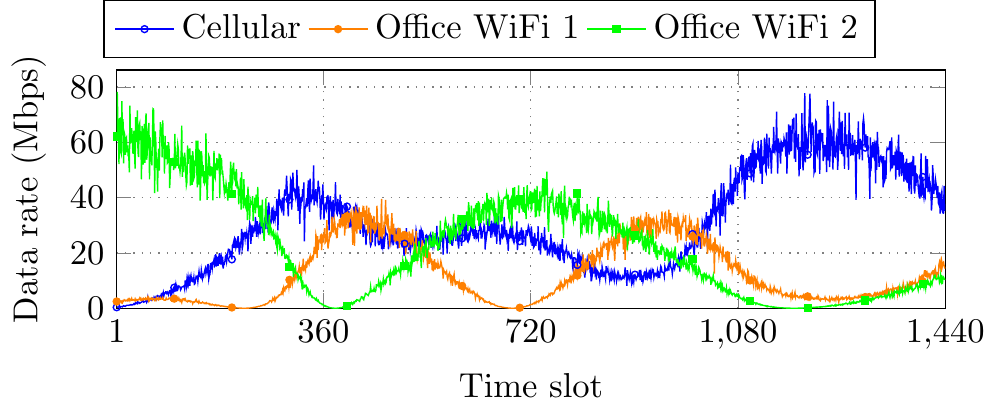}
        \caption{Continuous, 10\% gaussian noise.}
        \label{subfigure:cont_noise}
    \end{subfigure}%
    \begin{subfigure}[t]{0.5\textwidth}
        \centering
        \includegraphics [scale=0.7]
{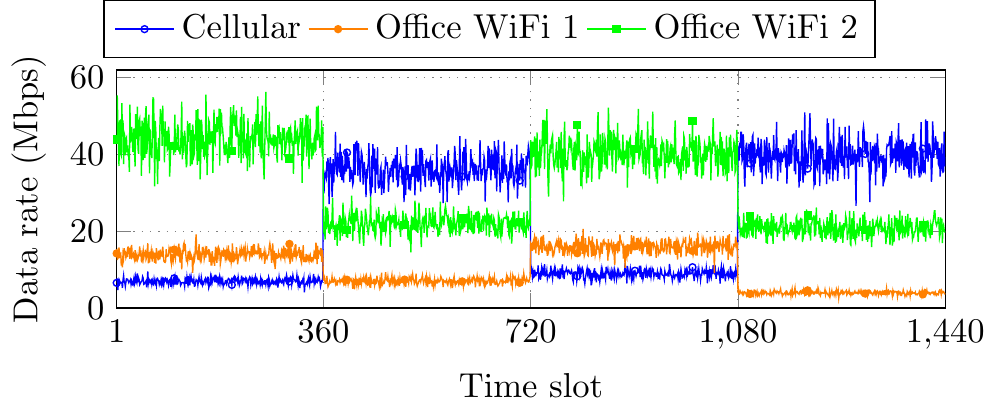}
        \caption{Discrete, 10\% gaussian noise.}
        \label{subfigure:disc_noise}
    \end{subfigure}
    \caption{Variation of data rates with within one iteration in each setup with noise. The original instances without noise are given in Figure \ref{figure:learning_pattern_data_rates}.}
    \label{fig:noisy_instances}
\end{figure*}

\begin{figure*}[!htb]
    \centering
    \begin{subfigure}[t]{0.5\textwidth}
        \centering
        \includegraphics [scale=0.7]
{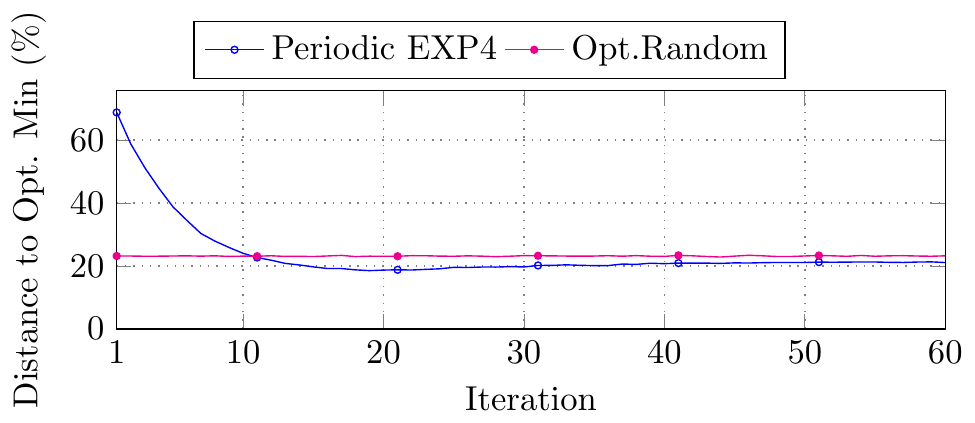}
        \caption{Continuous, no noise.}
        \label{subfigure:score_cont_no_noise}
    \end{subfigure}%
    \begin{subfigure}[t]{0.5\textwidth}
        \centering
        \includegraphics [scale=0.7]
{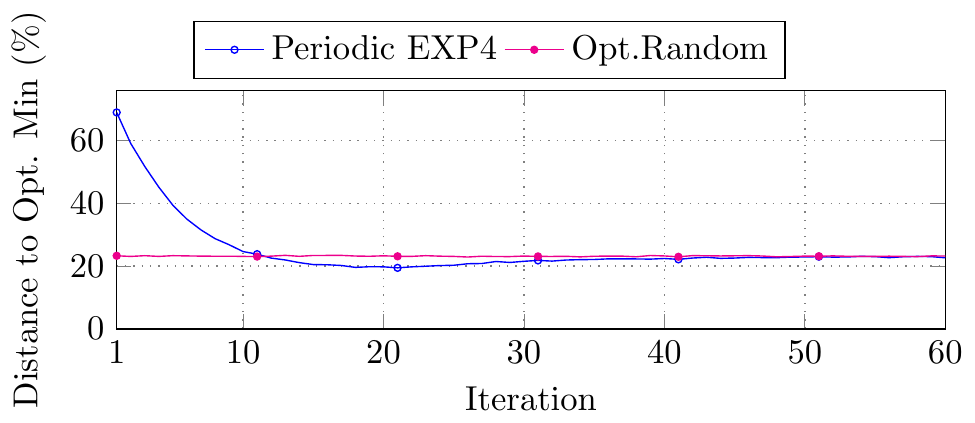}
        \caption{Continuous, 10\% gaussian noise.}
        \label{subfigure:score_cont_noise}
    \end{subfigure}
    \begin{subfigure}[t]{0.5\textwidth}
        \centering
        \includegraphics [scale=0.7]
{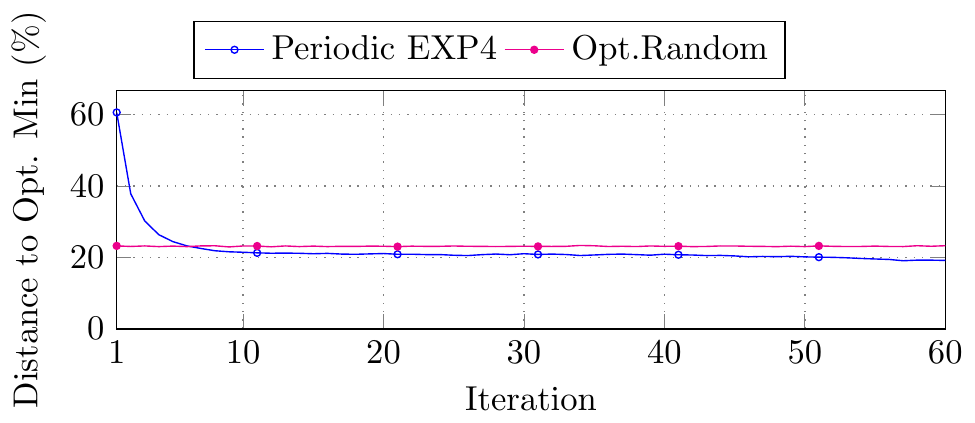}
        \caption{Discrete, no noise.}
        \label{subfigure:score_disc_no_noise}
    \end{subfigure}%
    \begin{subfigure}[t]{0.5\textwidth}
        \centering
        \includegraphics [scale=0.7]
{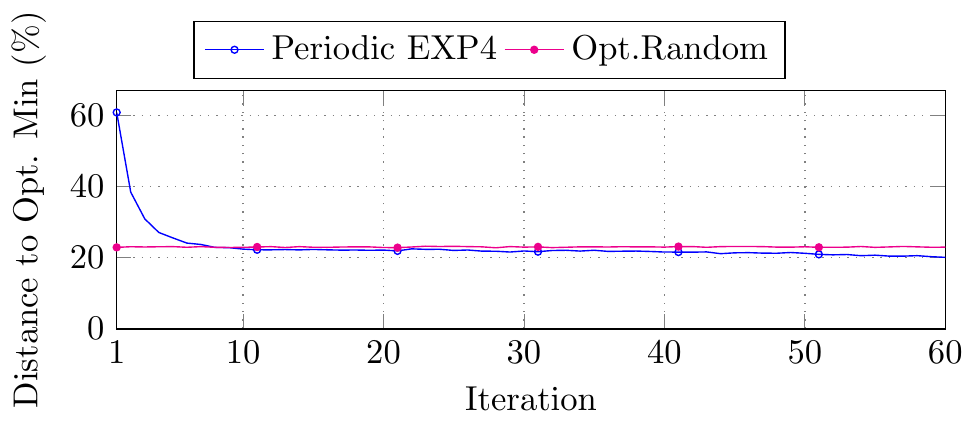}
        \caption{Discrete, 10\% gaussian noise.}
        \label{subfigure:score_disc_noise}
    \end{subfigure}
    \caption{Distance to Optimal minimum plotted over each iteration for each setup.}
    \label{fig:score_noisy_instances}
\end{figure*}
    
\subsection{Comparison of Different Periods}
\label{subsection:period_comparison}
    It is useful to run \algoname with multiple periods, especially in cases where the best period is not obvious. The is the case if, for example, the bit rates (or more generally, rewards of pulling arms) vary continuously with time.

Figures \ref{fig:compare_periods} and \ref{fig:compare_periods_hard} compares the performance of \algoname when run with different period sets. The period sets chosen are $\{1\}$, $\{4\}$, $\{1,2,\cdots,15\}$, $\{1,2,\cdots,24\}$ and $\{1,2,\cdots,45\}$. Running \algoname with period set $\{1\}$ is equivalent to running EXP3, and running with period set $\{4\}$ is equivalent to running four separate instance of EXP3, switching between the four instances every quarter of an iteration. Period set $\{4\}$ is used for comparison as the discrete setting we use has a known ``best period'' of 4. It is important to note, however, that the ``best period'' is often not known prior to the experiment in practice. We also include the Optimal Random player as a point of comparison.

\begin{figure*}[!htb]
    \centering
    \begin{subfigure}[t]{0.5\textwidth}
        \centering
        \includegraphics [scale=0.7]
{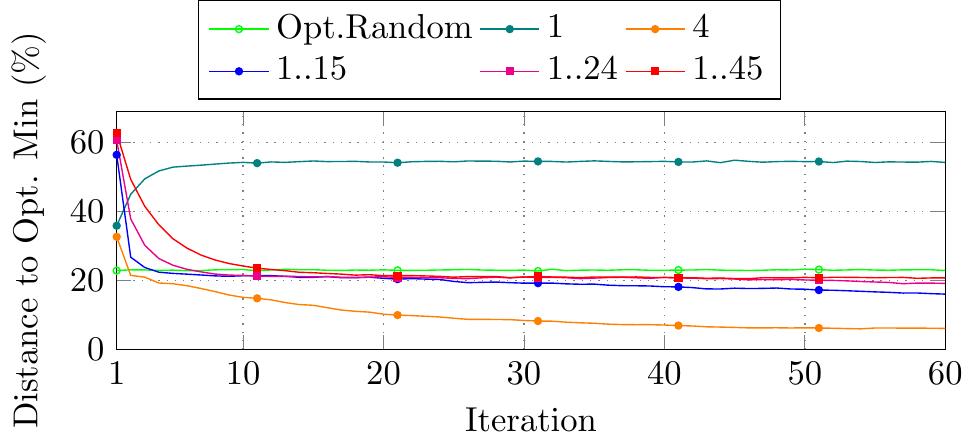}
        \caption{Discrete setting from Figure \ref{subfigure:discrete_data_rate}.}
        \label{subfigure:compare_periods_disc}
    \end{subfigure}%
    \begin{subfigure}[t]{0.5\textwidth}
        \centering
        \includegraphics [scale=0.7]
{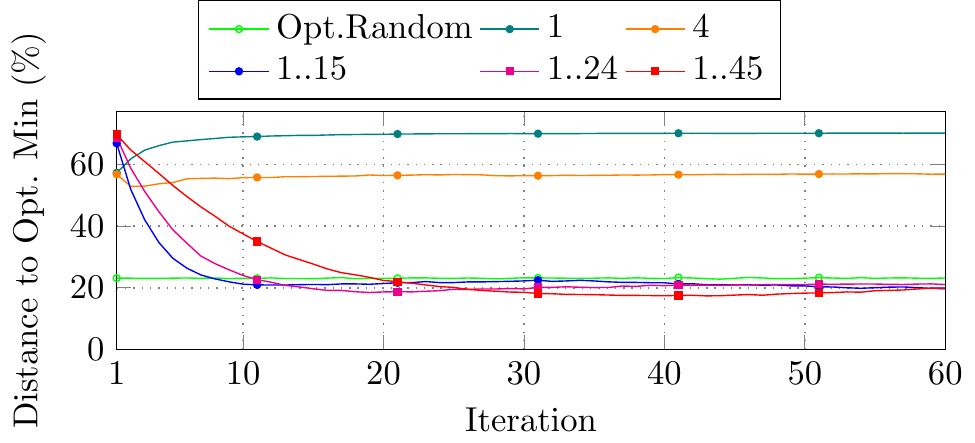}
        \caption{Continuous setting from Figure \ref{subfigure:continuous_data_rate}.}
        \label{subfigure:compare_periods_cont}
    \end{subfigure}
    \caption{Comparison of different period sets on the discrete and continuous settings.}
    \label{fig:compare_periods}
\end{figure*}

From the results on the discrete setting (Figure \ref{subfigure:compare_periods_disc}), we can see period set $\{4\}$ immediately taking the lead, and having significantly better performance than other period sets. This is to be expected, as in this case, the algorithm does not need to figure out what the ``right period'' is. Period set $\{4\}$ however performs poorly on the continuous settings.

\begin{figure*}[!htb]
    \centering
    \begin{subfigure}[t]{\textwidth}
        \centering
        \includegraphics [scale=0.7]
{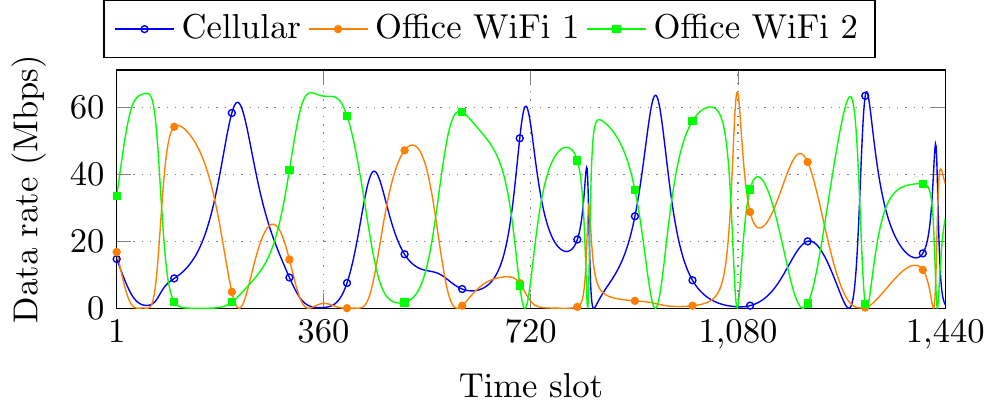}
        \caption{``Harder'' continuous setting with more bandwidth fluctuations.}
        \label{subfigure:datarates_cont_hard}
    \end{subfigure}
    \begin{subfigure}[t]{\textwidth}
        \centering
        \includegraphics [scale=0.7]
{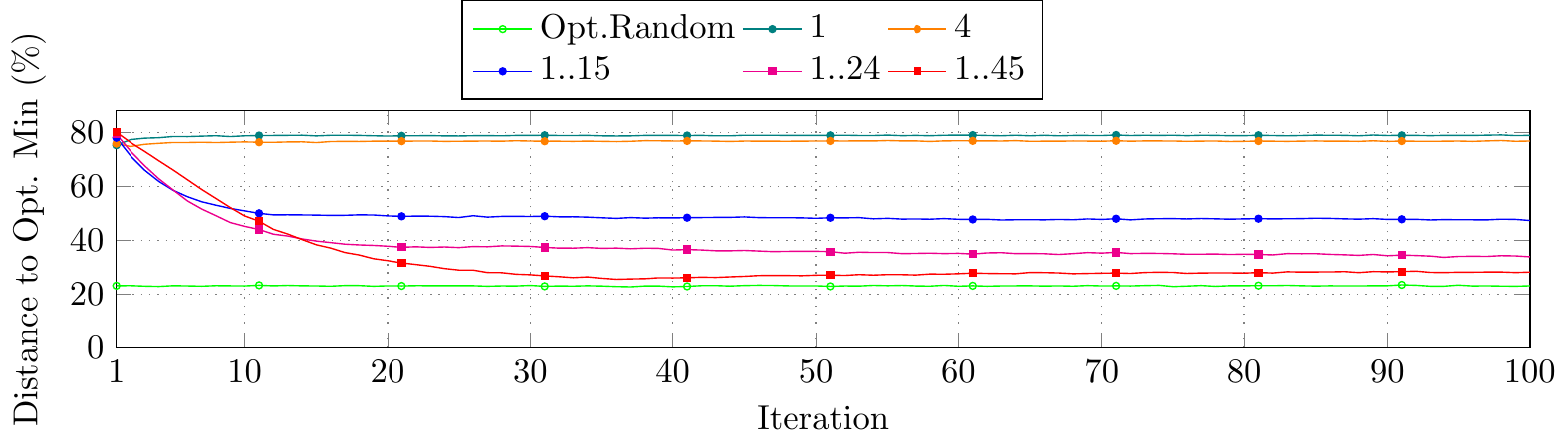}
        \caption{Comparison on the ``harder'' continuous setting.}
        \label{subfigure:compare_periods_cont_hard}
    \end{subfigure}
    \caption{Comparison of different period sets on a continuous setting with more bandwidth fluctuations. This is run for 100 iterations, a little over three months.}
    \label{fig:compare_periods_hard}
\end{figure*}

To better illustrate the utility of larger period sets, we run another experiment on a more complex continuous setting (Figure \ref{subfigure:datarates_cont_hard}), with bandwidths that fluctuate more than the setting in Figure \ref{subfigure:continuous_data_rate}. From this experiment, we can see that while larger period sets learn more slowly, they can converge to better results after a sufficiently long period of time. However, on simpler instances like in Figure \ref{subfigure:compare_periods_cont}/\ref{subfigure:continuous_data_rate}, the advantage from having a larger period set is less significant.

In summary, larger period sets have greater utility when more fluctuations in bandwidths (or rewards) are expected, as it allows the algorithm to more closely match the target pattern. However, a larger period set means the algorithm will take a longer time to learn. This is in line with the usual trade-offs between flexibility and efficiency in machine learning problems.

\subsection{Performance Comparison of Algorithms.}
\label{subsection:performance_comparison}
Figure \ref{figure:algo_comparison_cumulativeGain} provides the complete distribution of the cumulative gain each device observes when using \algoname, EXP3 or Optimal Random. Table \ref{table:performanceComparison_cumulativeGain} lists the median and standard deviation for each algorithm. The median and standard deviation values are computed over 20 runs with 20 devices each, for a total of 400 data points. All three have very similar median cumulative gains. We do note however, that the median cumulative gain is unlikely to indicate much, as it is likely to be close to the average cumulative gain, which as mentioned before, is not very useful as a metric as it is maximized whenever there is at least one device in each network.
On the other hand, we can see that \algoname has a lower variance than EXP3, which suggests that \algoname divides the bandwidths more evenly between the devices.

\begin{table}[!htb]
\small
\centering
\caption{Median and standard deviation of cumulative download (GB) per device, when using each of the three algorithms, namely \algoname, EXP3 and Optimal Random.}
\label{table:performanceComparison_cumulativeGain}
\newcolumntype{L}{>{\centering\arraybackslash}p{2.5cm}}
\begin{tabular}{l c c c }
\hline 
& \textbf{Periodic EXP4}     & \textbf{EXP3}     & \textbf{Optimal Random}     \\ \hline
\multicolumn{1}{l}{\textbf{Median (GB)}} & 34.78 & 35.08 & 34.67 \\ 
\multicolumn{1}{l}{\textbf{Standard deviation (GB)}} & 0.20  & 2.72  & 0.04  \\ \hline
\end{tabular}
\end{table}

\begin{figure}[!htb]
\begin{center}
\includegraphics [scale=0.9]
{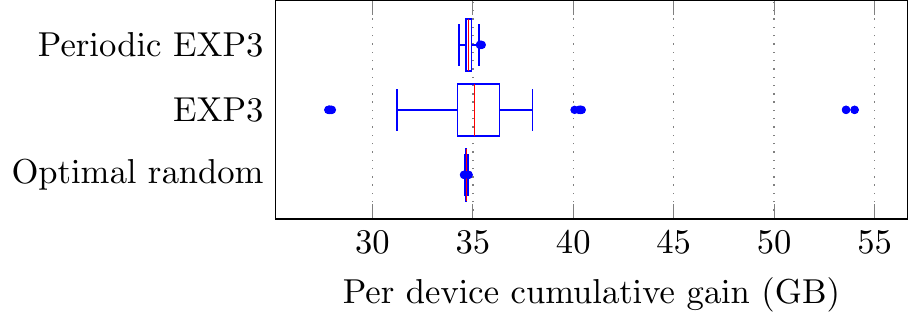}
\end{center}
\caption{Cumulative gain (GB) of a single device when using each of the algorithms, taking into account all devices across all runs --- the box represents the interquartile range (middle 50\% cumulative gains), the vertical line in the box denotes the median, the whiskers show the range of the remaining cumulative gains, excluding outliers shown as dots.}
\label{figure:algo_comparison_cumulativeGain}
\end{figure}

\subsection{Mobility of Users}
    We consider a similar setting to Figure \ref{figure:heterogeneousNetworks}, with 20 mobile users (each with a mobile device) and 9 networks. Devices have access to a different set of networks over time depending on their locations. We divide an iteration into 6 phases, based on the mobility of users, as listed in Table \ref{table:mobility_setting}. In phase 1, all the mobile users are at home (same building with common networks); 10 devices have access to networks 1, 2 and 3 while the remaining have access to networks 1 and 2. We assume 5 users are always at home (whose devices have access to networks 1, 2 and 3 --- hence, they might need to switch networks when the others go out). The rest spend 13 hours at home. In phase 2, they travel (one hour) together to office, during which they have access to networks 4 and 5. In phase 3, they have access to networks 6, 7 and 8 at the office. After 3 hours of work, in phase 4, 10 users go out for a 1-hour lunch during which they have access to networks 8 and 9. After lunch, they spend another 5 hours at the office (phase 5), before travelling home for an hour in phase 6. 

\begin{table}[!htb]
\small
\centering
\caption{Phases during one iteration, the time slots delimiting each phase (relative to the first time slot of the iteration), and the list of networks available to each device during every phase.}
\label{table:mobility_setting}
\newcolumntype{L}{>{\centering\arraybackslash}p{2.5cm}}
\begin{tabular}{l c c c c}
\hline
\textbf{Phase(s)}                                                                                                  & \textbf{1}      & \textbf{2 and 6}     & \textbf{3 and 5}       & \textbf{4}      \\ \hline
\multirow{2}{*}{\textbf{\begin{tabular}[c]{@{}l@{}}Time slots delimiting \\ each phase\end{tabular}}} & $1 \cdots 780$ & $781 \cdots 840$ & $841 \cdots 1020$ & $1021 \cdots 1080$\\ 
 & $1381 \cdots 1440$ & $1081 \cdots 1380$ &  \\\hline
\multirow{2}{*}{\textbf{\begin{tabular}[c]{@{}l@{}}Device(s): their list\\ of available networks\end{tabular}}} & 1 - 10: 1, 2, 3 & 1 - 5: 1, 2, 3 & 1 - 5: 1, 2, 3   & 1 - 5: 1, 2, 3  \\
                                                                                                                & 11 - 20: 1, 2   & 6 - 20: 4, 5   & 6 -  20: 6, 7, 8 & 6 - 11: 6, 7, 8 \\
                                                                                                                &                 &                &                  & 11- 20: 8, 9 \\ \hline
\end{tabular}
\end{table}

Vanilla \algoname is oblivious to networks possibly becoming unavailable, and we give it a gain of zero whenever it decides to select an inaccessible network, in the hope that it also learns the pattern of network availability. We compare it against an optimized version of \algoname which selects only from the set of currently available networks. Figure \ref{figure:mobility_distanceToNashEquilibrium} shows that, as expected, the optimized version initially yields a better performance. However, after the Vanilla \algoname algorithm learns the pattern, they both perform equally well. 
The intuition for a slightly better performance compared to Optimal Random 
is that when using \algoname, many of the devices eventually converge to selecting a single network with high probability, and are thus less likely to make a random \enquote{bad} selection. On the other hand, Optimal Random has every device running the exact same (though distributionally optimal) probability distribution, leading to the occasional random event where too many devices select the same network simultaneously.

\begin{figure}[!htb]
\begin{center}
\includegraphics 
{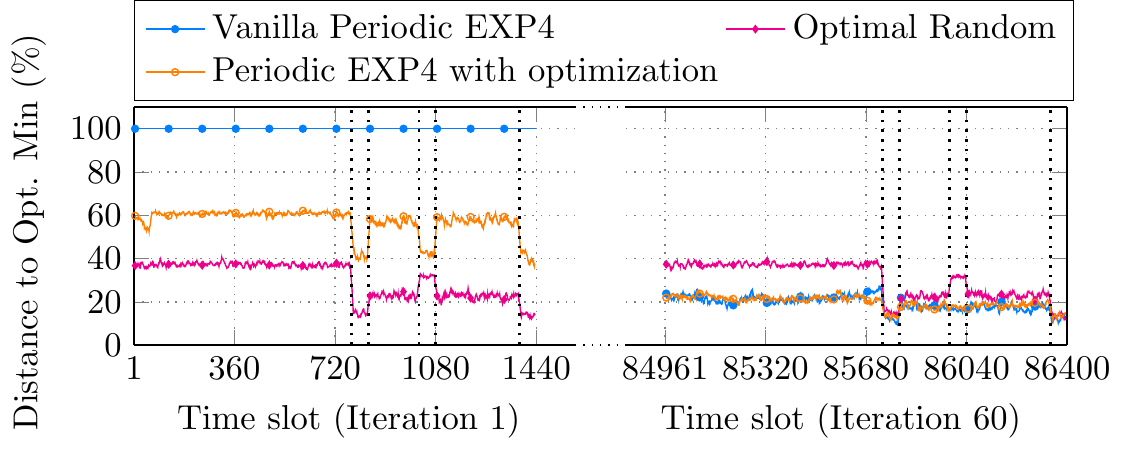}
\end{center}
\caption{
Distance to Optimal minimum in the first and last repetitions of the mobility setting. Vertical lines indicate the time of any change in the environment.}
\label{figure:mobility_distanceToNashEquilibrium}
\end{figure}
    \label{subsection:appendix_mobility_setting}

\end{document}